\documentclass[sigplan,screen]{acmart}\settopmatter{printacmref=false}

\setcopyright{none}
\acmPrice{15.00}
\acmYear{2019}
\copyrightyear{2019}
\acmConference[PL]{PL}{2019}{USA}

\bibliographystyle{ACM-Reference-Format}


\usepackage{booktabs}
\usepackage{subcaption}
\usepackage{algorithm}
\usepackage[noend]{algpseudocode}
\usepackage{amsmath}
\usepackage{amssymb}
\usepackage{stmaryrd}
\usepackage{url}
\usepackage{listings}
\usepackage{mdwlist}
\usepackage{pifont}
\usepackage{graphicx}
\usepackage{wrapfig}
\usepackage{fancyvrb}
\usepackage{paralist}
\usepackage{caption}
\usepackage{comment}
\usepackage{bbm, dsfont}
\usepackage{mathrsfs}
\usepackage{xcolor}
\usepackage{tikz}
\usepackage{pgfplots}
\usepackage{esvect}
\usepackage{xspace}
\usepackage{array, multirow}
\usepackage{rotating, makecell}
\usepackage{enumitem}
\usepackage{amssymb}

\newcommand*{\comments}{}
\ifdefined\comments
\newcommand{\todo}[1]{\textcolor{red}{#1}}
\else
\newcommand{\todo}[1]{}
\fi

\newcommand{\irule}[2]%
   {\mkern-2mu\displaystyle\frac{#1}{\vphantom{,}#2}\mkern-2mu}
\newcommand{\irulelabel}[3]
{
\mkern-2mu
\begin{array}{ll}
\displaystyle\frac{#1}{\vphantom{,}#2} & \!\!\!\!#3
\end{array}
\mkern-2mu
}

\algnewcommand\algorithmicforeach{\textbf{for each}}
\algdef{S}[FOR]{ForEach}[1]{\algorithmicforeach\ #1\ \algorithmicdo}

\algnewcommand{\IfThenElse}[3]{
\State \algorithmicif\ #1\ \algorithmicthen\ #2\ \algorithmicelse\ #3}

\newcommand{\toolname}{\textsc{Migrator}\xspace}
\newcommand{\sketchtool}{\textsc{Sketch}\xspace}
\newcommand{\mediator}{\textsc{Mediator}\xspace}

\newcommand{\set}[1]{\{ #1 \}}

\newcommand{\proj}{\Pi}
\newcommand{\filter}{\sigma}
\newcommand{\equijoin}[2]{\vphantom{}_{#1\!\!\!}\Join_{#2}\!\!} 
\newcommand{\ins}{\texttt{ins}}
\newcommand{\del}{\texttt{del}}
\newcommand{\upd}{\texttt{upd}}

\newcommand{\SELECT}{~\textsf{\textbf{SELECT}}~}
\newcommand{\INSERT}{~\textsf{\textbf{INSERT}}~}
\newcommand{\DELETE}{~\textsf{\textbf{DELETE}}~}

\newcommand{\VALUES}{~\textsf{\textbf{VALUES}}~}
\newcommand{\WHERE}{~\textsf{\textbf{WHERE}}~}
\newcommand{\FROM}{~\textsf{\textbf{FROM}}~}
\newcommand{\JOIN}{~\textsf{\textbf{JOIN}}~}
\newcommand{\INTO}{~\textsf{\textbf{INTO}}~}
\newcommand{\ON}{~\textsf{\textbf{ON}}~}

\newcommand{\schema}{\mathcal{S}}
\newcommand{\prog}{\mathcal{P}}
\newcommand{\inst}{\mathcal{I}}
\newcommand{\sketch}{\Omega}
\newcommand{\valueCorr}{\Phi}

\newcommand{\pred}{\phi}
\newcommand{\hole}{\textbf{??}}

\newcommand{\view}{J}

\newcommand{\cstr}{\Psi} 
\newcommand{\model}{\mathcal{M}}
\newcommand{\cex}{\mathcal{E}}
\newcommand{\inputs}{\omega}
\newcommand*\circled[1]{\tikz[baseline=(char.base)]{
            \node[shape=circle,draw,line width=0.5pt,inner sep=1pt] (char) {#1};}}
\newcommand{\choice}{\,\vcenter{\hbox{\small \protect\circled{\tiny ?}}}\,}
\newcommand{\denot}[1]{\llbracket #1 \rrbracket}

\begin{document}

\title[Synthesizing Database Programs for Schema Refactoring]
{Synthesizing Database Programs \\ for Schema Refactoring}


\author{Yuepeng Wang}
\affiliation{
  \institution{University of Texas at Austin, USA}
}
\email{ypwang@cs.utexas.edu}
\author{James Dong}
\affiliation{
  \institution{University of Texas at Austin, USA}
}
\email{jdong@cs.utexas.edu}
\author{Rushi Shah}
\affiliation{
  \institution{University of Texas at Austin, USA}
}
\email{rshah@cs.utexas.edu}
\author{Isil Dillig}
\affiliation{
  \institution{University of Texas at Austin, USA}
}
\email{isil@cs.utexas.edu}

\begin{abstract}
Many  programs that interact with a database need to undergo \emph{schema refactoring} several times during their life cycle. Since this process  typically requires making significant changes to the program's implementation, schema refactoring is often non-trivial and error-prone. Motivated by this problem, we propose a new technique for \emph{automatically synthesizing} a new version of a database program given its original version and  the source and target schemas. Our method does not require  manual user guidance and ensures that the synthesized program is equivalent to the original one.   Furthermore, our method is quite efficient and can synthesize new versions of database programs (containing up to 263 functions) that are extracted from real-world web applications  with an average synthesis time of 69.4 seconds.
\end{abstract}

\begin{CCSXML}
<ccs2012>
<concept>
<concept_id>10011007.10011006.10011050.10011056</concept_id>
<concept_desc>Software and its engineering~Programming by example</concept_desc>
<concept_significance>500</concept_significance>
</concept>
<concept>
<concept_id>10011007.10011074.10011092.10011782</concept_id>
<concept_desc>Software and its engineering~Automatic programming</concept_desc>
<concept_significance>500</concept_significance>
</concept>
<concept>
<concept_id>10002951.10002952.10003212.10003213</concept_id>
<concept_desc>Information systems~Database utilities and tools</concept_desc>
<concept_significance>500</concept_significance>
</concept>
</ccs2012>
\end{CCSXML}

\ccsdesc[500]{Software and its engineering~Programming by example}
\ccsdesc[500]{Software and its engineering~Automatic programming}
\ccsdesc[500]{Information systems~Database utilities and tools}

\keywords{Program Synthesis, Program Sketching, Relational Databases}

\maketitle

\section{Introduction}\label{sec:intro}

\emph{Database-driven applications} have been, and continue to be, enormously popular for web development. For example, most contemporary websites are built using database-driven applications in order to  generate webpage content dynamically. As a result, database applications  form the backbone of many industries, ranging from banking and e-commerce to telecommunications.

A common theme in the evolution of database applications is that they typically undergo \emph{schema refactoring} several times during their life cycle~\cite{refactordb,refactorsql}. Schema refactoring involves a change to the database schema, with the goal of improving the design and/or performance of the application \emph{without} changing its  semantics. Despite the frequent need to perform schema refactoring, this task is known to be non-trivial and error-prone~\cite{ambler2007test, wikitech}.
In particular, changes to the database schema often require re-implementing parts of the database program to make the program logic consistent with the underlying schema.
This task is especially non-trivial in the presence of \emph{structural} schema changes, such as those that involve splitting and merging relations or moving attributes between different tables.

While prior work has addressed the problem of \emph{verifying} equivalence between two database programs before and after schema refactoring~\cite{mediator}, generating a new version of the program after a schema change still remains an arduous and manual task. Motivated by this problem, this paper takes a step towards simplifying the evolution of programs that interact with a database. Specifically, we consider \emph{database programs} that consist of a set of  database transactions written in SQL. Given an existing database program $\prog$ that operates over source schema $\schema$ and a new target schema $\schema'$ that $\prog$ should be migrated to, our  method  automatically synthesizes a new database program $\prog'$ over the new schema $\schema'$ such that $\prog$ and $\prog'$ are \emph{semantically equivalent}. Thus, our  technique  automates the schema evolution process for these kinds of database programs while ensuring that no desirable behaviors are lost and no unwanted behaviors are introduced in the process.

Our methodology for automatically migrating database programs to a new schema is illustrated schematically in Figure~\ref{fig:overview}. Rather than synthesizing the new version of the program in one go, our algorithm decomposes the problem into three simpler sub-tasks, each of which leverages the results of the previous task in the pipeline. Specifically, given the source and the target schemas $\schema, \schema'$, our algorithm starts by guessing a candidate \emph{value correspondence} relating $\schema$ and $\schema'$. At a high level,  a value correspondence $\valueCorr$ specifies how attributes in $\schema'$ can be obtained using  the attributes in $\schema$~\cite{miller}.  Intuitively, learning a value correspondence is useful because (a) it is relatively easy to guess the correct correspondence based on attribute names in the schema, and (b) having a value correspondence dramatically constrains the space of programs that may be equivalent to the original program $\prog$.

\begin{figure}[t]
\vspace{5pt}
\begin{center}
\includegraphics[scale=0.35]{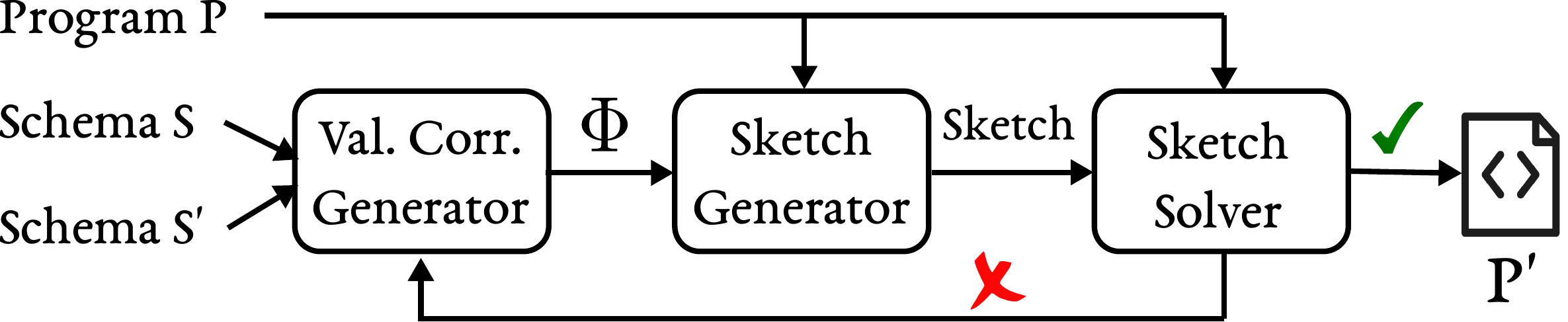}
\end{center}
\vspace{-5pt}
\caption{Synthesis methodology.}
\label{fig:overview}
\vspace{-10pt}
\end{figure}

While the value correspondence holds important clues as to what the transformation should look like, it nonetheless does not uniquely determine the target program $\prog'$. Thus, given a candidate value correspondence $\valueCorr$, our synthesis algorithm generates a \emph{program sketch} $\sketch$  that represents the space of all programs that \emph{may} be equivalent to the original program $\prog$ according to $\valueCorr$. In this context, a program sketch is a database program where some of the tables, attributes, or boolean constants are unknown. Furthermore, assuming  the  correctness of the candidate value correspondence $\valueCorr$, the sketch $\sketch$ is guaranteed to have a completion that is equivalent to $\prog$ (if one exists).

The third, and final, step in our synthesis pipeline ``solves'' the sketch $\sketch$ by finding an instantiation $\prog'$ of $\sketch$ that is equivalent to $\prog$. However, unlike existing sketch solvers that use the \emph{counterexample-guided inductive synthesis (CEGIS)} methodology, we use a different approach that does not require symbolically encoding the semantics of database programs into an SMT formula. Specifically, since  database query languages like SQL are not easily amenable to symbolic reasoning using established first-order theories supported by SMT solvers, our approach instead performs enumerative search over the space of all possible completions of the sketch. However, because this search space  is typically very large, a na\"{\i}ve search algorithm is difficult to scale to realistic database programs. Our approach deals with this difficulty by using a novel  algorithm that leverages \emph{minimum failing inputs (MFIs)} to dramatically prune the search space.

Overall, our synthesis algorithm for automatically migrating database programs to a new schema has several useful properties: First, it is completely push-button and does not require the user to provide anything other than the original program and the source and target schemas. Second, our approach is sound in that the synthesized program is provably equivalent to the original program and does not introduce any new, unwanted behaviors. Finally, since our method performs backtracking search over all possible value correspondences, it is guaranteed to find an equivalent program over the new schema if one exists.

We have implemented our proposed approach in a prototype tool called \toolname for automatically migrating database programs to a new schema. We evaluate \toolname on 20 benchmarks and show that it can successfully synthesize the new versions for \emph{all} twenty database programs with an average synthesis time of $69.4$ seconds per benchmark.  Thus, we believe these experiment results provide preliminary, but firm, evidence that the proposed synthesis technique can be useful to database program developers during the schema evolution process.

In all, this paper makes the following key contributions:

\begin{itemize}[leftmargin=*]
\item We propose a new synthesis technique for automatically migrating database programs to a new schema.
\item We describe a MaxSAT-based approach for lazily enumerating possible value correspondences between two schemas.
\item We describe a technique for generating program sketches from a given  value correspondence.
\item We propose a new  sketch solver based on symbolic search and conflict-driven learning from minimum failing inputs.
\item We evaluate the proposed technique on 20 schema refactoring scenarios and demonstrate that our  method can automate the desired migration task in all cases.
\end{itemize}

\section{Overview}

\begin{figure}[t]
\small
\[
\centering
\begin{array}{l}
\textbf{update}~ \emph{addInstructor(int id, String name, Binary pic)} \\
\ \ \ \ \INSERT \INTO \emph{Instructor} \VALUES \emph{(id, name, pic);} \\
\\[-5pt]
\textbf{update}~ \emph{deleteInstructor(int id)} \\
\ \ \ \ \DELETE \FROM \emph{Instructor} \WHERE \emph{InstId = id;} \\
\\[-5pt]
\textbf{query }~ \emph{getInstructorInfo(int id)} \\
\ \ \ \ \SELECT \emph{IName, IPic} \FROM \emph{Instructor} \WHERE \emph{InstId = id;} \\
\\[-5pt]
\textbf{update}~ \emph{addTA(int id, String name, Binary pic)} \\
\ \ \ \ \INSERT \INTO \emph{TA} \VALUES \emph{(id, name, pic);} \\
\\[-5pt]
\textbf{update}~ \emph{deleteTA(int id)} \\
\ \ \ \ \DELETE \FROM \emph{TA} \WHERE \emph{TaId = id;} \\
\\[-5pt]
\textbf{query }~ \emph{getTAInfo(int id)} \\
\ \ \ \ \SELECT \emph{TName, TPic} \FROM \emph{TA} \WHERE \emph{TaId = id;} \\
\end{array}
\]
\vspace{-10pt}
\caption{An example database program.}
\label{fig:example-prog}
\vspace{-10pt}
\end{figure}

In this section, we give an overview of our technique using a simple motivating example. Consider the database program shown in Figure~\ref{fig:example-prog} for managing and querying a course-related database with the following schema:
\[
\begin{array}{rl}
\emph{Class} & \!\!\!\!\! \emph{(ClassId, InstId, TaId)} \\
\emph{Instructor} & \!\!\!\!\! \emph{(InstId, IName, IPic)} \\
\emph{TA} & \!\!\!\!\! \emph{(TaId, TName, TPic)} \\
\end{array}
\]
This database has three tables that store information about courses, instructors, and TAs respectively. Here, the \emph{Instructor} and \emph{TA} tables store profile information about the course staff, including a picture. Since accessing a table containing large images may be potentially inefficient, the programmer decides to refactor the schema by introducing a new table for images. In particular, the desired new schema is as follows:
\[
\begin{array}{rl}
\emph{Class} & \!\!\!\!\! \emph{(ClassId, InstId, TaId)} \\
\emph{Instructor} & \!\!\!\!\! \emph{(InstId, IName, PicId)} \\
\emph{TA} & \!\!\!\!\! \emph{(TaId, TName, PicId)} \\
\emph{Picture} & \!\!\!\!\! \emph{(PicId, Pic)} \\
\end{array}
\]
As a result of this schema change, the program from Figure~\ref{fig:example-prog} needs to be re-implemented to conform to the new schema. We now explain how \toolname automatically synthesizes the new version of the program.

\paragraph{Value correspondence generation.} As mentioned in Section~\ref{sec:intro}, \toolname lazily enumerates possible value correspondences (VCs) between the source and target schemas. For this example, the first VC $\valueCorr$ generated by \toolname contains the following mappings:

\[
\begin{array}{rcl}
\emph{Instructor.IPic} & \rightarrow & \emph{Picture.Pic} \\
\emph{TA.TPic} & \rightarrow & \emph{Picture.Pic} \\
\end{array}
\]
In addition, all other attributes $T.a$ in the source schema are mapped to the same $T.a$ in the target schema.

\begin{figure}[t]
\small
\[
\centering
\begin{array}{l}
\textbf{update}~ \emph{addInstructor(int id, String name, Binary pic)} \\
\ \ \ \ \INSERT \INTO \emph{$\hole_1$ \{ Picture \!$\Join$\! Instructor, Picture \!$\Join$\! TA \!$\Join$\! Instructor,} \\
\ \ \ \ \ \ \ \ \emph{Picture $\Join$ TA $\Join$ Class $\Join$ Instructor \}} \VALUES \emph{(id, name, pic);} \\
\\[-3pt]
\textbf{update}~ \emph{deleteInstructor(int id)} \\
\ \ \ \ \DELETE \emph{$\hole_2$ \{ [Picture], $\ldots$, [Picture, Instructor, TA, Class] \} } \\
\ \ \ \ \ \ \ \ \FROM \emph{$\hole_3$ \{ Picture \!$\Join$\! Instructor, Picture \!$\Join$\! TA \!$\Join$\! Instructor,} \\
\ \ \ \ \ \ \ \ \emph{Picture $\Join$ TA $\Join$ Class $\Join$ Instructor \}} \WHERE \emph{InstId = id;} \\
\\[-3pt]
\textbf{query }~ \emph{getInstructorInfo(int id)} \\
\ \ \ \ \SELECT \emph{IName, Pic} \FROM \emph{$\hole_4$ \{} \\
\ \ \ \ \ \ \ \ \emph{Picture $\Join$ Instructor, Picture $\Join$ TA $\Join$ Instructor,} \\
\ \ \ \ \ \ \ \ \emph{Picture $\Join$ TA $\Join$ Class $\Join$ Instructor \}} \WHERE \emph{InstId = id;} \\
\\[-3pt]
\textbf{update}~ \emph{addTA(int id, String name, Binary pic)} \\
\ \ \ \ \INSERT \INTO \emph{$\hole_5$ \{ Picture $\Join$ TA, Picture $\Join$ Instructor $\Join$ TA,} \\
\ \ \ \ \ \ \ \ \emph{Picture $\Join$ Instructor $\Join$ Class $\Join$ TA \}} \VALUES \emph{(id, name, pic);} \\
\\[-3pt]
\textbf{update}~ \emph{deleteTA(int id)} \\
\ \ \ \ \DELETE \emph{$\hole_6$ \{ [Picture], $\ldots$, [Picture, Instructor, TA, Class] \} } \\
\ \ \ \ \ \ \ \ \FROM \emph{$\hole_7$ \{ Picture $\Join$ TA, Picture $\Join$ Instructor $\Join$ TA,} \\
\ \ \ \ \ \ \ \ \emph{Picture $\Join$ Instructor $\Join$ Class $\Join$ TA \}} \WHERE \emph{TaId = id;} \\
\\[-3pt]
\textbf{query }~ \emph{getTAInfo(int id)} \\
\ \ \ \ \SELECT \emph{TName, Pic} \FROM \emph{$\hole_8$ \{} \\
\ \ \ \ \ \ \ \ \emph{Picture $\Join$ TA, Picture $\Join$ Instructor $\Join$ TA,} \\
\ \ \ \ \ \ \ \ \emph{Picture $\Join$ Instructor $\Join$ Class $\Join$ TA \}} \WHERE \emph{TaId = id;} \\
\end{array}
\]
\vspace{-5pt}
\caption{Generated sketch over the new database schema.}
\label{fig:example-sketch}
\end{figure}

\paragraph{Sketch generation.} Next, \toolname uses the candidate VC $\valueCorr$ to generate a program sketch that encodes the space of all programs that are consistent with $\valueCorr$. The corresponding sketch for this example is shown in Figure~\ref{fig:example-sketch}. Here, each hole, denoted $\hole \set{c_1, \ldots, c_n}$, corresponds to an unknown constant drawn from the set $\set{c_1, \ldots, c_n}$. As will be discussed later in Section~\ref{sec:prelim}, we use the statement:
\[
\small
\INSERT \INTO ~ T_1 \Join T_2 ~ \VALUES \cdots
\]
as short-hand for:
\[
\small
\begin{array}{l}
\INSERT \INTO ~ T_1 ~ \VALUES \cdots \\
\INSERT \INTO ~ T_2 ~ \VALUES \cdots \\
\end{array}
\]
Thus, the first function  in the sketch corresponds to the following three possible implementations of \emph{addInstructor}:
\[
\small
\centering
\begin{array}{l}
\ \ \ \ \INSERT \INTO \emph{Instructor} \VALUES \emph{(id, name, $v_0$);} \\
\ \ \ \ \INSERT \INTO \emph{Picture} \VALUES \emph{($v_0$, pic);} \\
\ \ \ \ \emph{or} \\
\ \ \ \ \INSERT \INTO \emph{Instructor} \VALUES \emph{(id, name, $v_1$);} \\
\ \ \ \ \INSERT \INTO \emph{TA} \VALUES \emph{($v_2$, $v_3$, $v_1$);} \\
\ \ \ \ \INSERT \INTO \emph{Picture} \VALUES \emph{($v_1$, pic);} \\
\ \ \ \ \emph{or} \\
\ \ \ \ \INSERT \INTO \emph{Instructor} \VALUES \emph{(id, name, $v_4$);} \\
\ \ \ \ \INSERT \INTO \emph{Class} \VALUES \emph{($v_5$, id, $v_6$);} \\
\ \ \ \ \INSERT \INTO \emph{TA} \VALUES \emph{($v_6$, $v_7$, $v_4$);} \\
\ \ \ \ \INSERT \INTO \emph{Picture} \VALUES \emph{($v_4$, pic);} \\
\end{array}
\]
where $v_0, v_1, \ldots, v_7$ are unique values.

Observe that the program sketch shown in Figure~\ref{fig:example-sketch} has an enormous number of possible completions --- in particular, it corresponds to a search space of $164,025$ possible re-implementations of the original program.

\paragraph{Sketch completion.}  Given a sketch $\sketch$ and the original program $\prog$, the goal of sketch completion is to find an instantiation $\prog'$ of $\sketch$ such that $\prog'$ is equivalent to $\prog$, if such a $\prog'$ exists. Unfortunately, it is difficult to solve this sketch using existing solvers (e.g.,~\cite{sketch1,sketch2}) because the  symbolic encoding  of the program is quite complex due to the non-trivial semantics of SQL. In this paper, we deal with this difficulty by (a) encoding the space of \emph{all} possible programs represented by the sketch using a SAT formula $\cstr$, and (b) using minimum failing inputs to dramatically prune the search space represented by $\cstr$.

Going back to our sketch $\sketch$ from Figure~\ref{fig:example-sketch}, \toolname generates the following SAT formula that encodes all possible instantiations of $\sketch$:
\[
\begin{array}{c}
\oplus(b_1^1, b_1^2, b_1^3) \land \oplus(b_2^1, \ldots, b_2^{15}) \land
\oplus(b_3^1, b_3^2, b_3^3) \land \oplus(b_4^1, b_4^2, b_4^3) \land \\
\oplus(b_5^1, b_5^2, b_5^3) \land \oplus(b_6^1, \ldots, b_6^{15}) \land
\oplus(b_7^1, b_7^2, b_7^3) \land \oplus(b_8^1, b_8^2, b_8^3) \\
\end{array}
\]
Here, $\oplus$ denotes $n$-ary xor, and $b_i^j$ is a boolean variable that is assigned to true iff hole $\hole_i$ in the sketch is instantiated with the $j$-th constant in $\hole_i$'s domain.

Given this formula $\cstr$, \toolname queries the SAT solver for a model. For the purpose of this example, suppose the SAT solver returns the following model for $\cstr$:
\begin{equation}\label{eq:model1}
b_1^3 \land b_2^2 \land b_3^3 \land b_4^3 \land b_5^1 \land b_6^4 \land b_7^3 \land b_8^3
\end{equation}
which  corresponds to the following assignment  of the holes:
\begin{equation}
\small
\begin{array}{rl}
& \hole_1 = \hole_3 = \hole_4 = \emph{Picture} \Join \emph{TA} \Join \emph{Class} \Join \emph{Instructor} \\
\land & \hole_2 = [\emph{Instructor}] ~\land~ \hole_5 = \emph{Picture} \Join \emph{TA} ~\land~ \hole_6 = [\emph{TA}] \\
\land & \hole_7 = \hole_8 = \emph{Picture} \Join \emph{Instructor} \Join \emph{Class} \Join \emph{TA} \\
\end{array}
\label{eq:sat-formula}
\end{equation}

\begin{figure}[t]
\small
\[
\centering
\begin{array}{l}
\textbf{update}~ \emph{addInstructor(int id, String name, Binary pic)} \\
\ \ \ \ \INSERT \INTO \emph{Instructor} \VALUES \emph{(id, name, UID$_{\textit{0}}$);} \\
\ \ \ \ \INSERT \INTO \emph{Picture} \VALUES \emph{(UID$_{\textit{0}}$, pic);} \\
\\[-5pt]
\textbf{update}~ \emph{deleteInstructor(int id)} \\
\ \ \ \ \DELETE \emph{Instructor} \FROM \emph{Picture} \JOIN \emph{Instructor} \\
\ \ \ \ \ \ \ \ \ON \emph{Picture.PicId = Instructor.PicId} \WHERE \emph{InstId = id;} \\
\\[-5pt]
\textbf{query }~ \emph{getInstructorInfo(int id)} \\
\ \ \ \ \SELECT \emph{IName, Pic} \FROM \emph{Picture} \JOIN \emph{Instructor} \\
\ \ \ \ \ \ \ \ \ON \emph{Picture.PicId = Instructor.PicId} \WHERE \emph{InstId = id;} \\
\\[-5pt]
\textbf{update}~ \emph{addTA(int id, String name, Binary pic)} \\
\ \ \ \ \INSERT \INTO \emph{TA} \VALUES \emph{(id, name, UID$_{\textit{1}}$);} \\
\ \ \ \ \INSERT \INTO \emph{Picture} \VALUES \emph{(UID$_{\textit{1}}$, pic);} \\
\\[-5pt]
\textbf{update}~ \emph{deleteTA(int id)} \\
\ \ \ \ \DELETE \emph{TA} \FROM \emph{Picture} \JOIN \emph{TA} \\
\ \ \ \ \ON \emph{Picture.PicId = TA.PicId} \WHERE \emph{TaId = id;} \\
\\[-5pt]
\textbf{query }~ \emph{getTAInfo(int id)} \\
\ \ \ \ \SELECT \emph{TName, Pic} \FROM \emph{Picture} \JOIN \emph{TA} \\
\ \ \ \ \ \ \ \ \ON \emph{Picture.PicId = TA.PicId} \WHERE \emph{TaId = id;} \\
\end{array}
\]
\vspace{-5pt}
\caption{The synthesized database program.}
\label{fig:example-target}
\end{figure}

However, instantiating the sketch with this assignment results in a program $\prog'$ that is \emph{not} equivalent to $\prog$. Now, we \emph{could} block this program $\prog'$ by conjoining the  negation of Equation~\ref{eq:model1} with $\cstr$ and asking the SAT solver for another model. While this strategy would give us a different instantiation of sketch $\sketch$, it would preclude \emph{only one} of the $164,025$ possible instantiations of $\sketch$. Our key idea is to learn from this failure and block many other programs that are incorrect for the same reason as $\prog'$.

Towards this  goal, our approach  computes a \emph{minimum failing input}, which is a {shortest} sequence of function invocations such that the result of $\prog$ differs from that of $\prog'$. For this example, such a minimum failing input is the following invocation sequence $\inputs$:
\begin{equation}\label{eq:cex1}
 \emph{addTA}(\emph{ta1}, \emph{name1}, \emph{pic1}); \emph{getTAInfo}(\emph{ta1})
\end{equation}
This input establishes that $\prog'$ is \emph{not} equivalent to $\prog$ because the query result for $\prog$ is $\emph{(name1, pic1)}$ whereas the query result for $\prog'$ is empty.

Our idea is to utilize such a minimum failing input $\inputs$ to prune incorrect programs other than just $\prog'$. Specifically, let $\mathcal{F}$ denote the functions that appear in the invocation sequence $\inputs$, and let $\mathcal{H}$ be the holes that appear in the sketch for functions in $\mathcal{F}$. Our key intuition is that the assignments to holes in $\mathcal{H}$ are \emph{sufficient} for obtaining a spurious program, as $\inputs$ is a witness to the disequivalence between $\prog$ and $\prog'$.
 Thus, rather than blocking the whole model, we can extract the assignment to the holes in $\mathcal{H}$ and use this partial assignment to obtain a much stronger blocking clause. For our example, this yields the clause $\neg (b_5^1 \land b_8^3)$ because only the fifth and eighth holes appear in the sketches for \emph{addTA} and \emph{getTAInfo}. Using this blocking clause, we can eliminate a total of $18,225$ incorrect programs rather than just $\prog'$.

Continuing in this manner, \toolname finally obtains the following model for Equation~\ref{eq:sat-formula}:
\[
b_1^1 \land b_2^2 \land b_3^1 \land b_4^1 \land b_5^1 \land b_6^4 \land b_7^1 \land b_8^1
\]
This model corresponds to the program $\prog'$ shown in Figure~\ref{fig:example-target}, which is indeed equivalent to the original program from Figure~\ref{fig:example-prog}. Thus, \toolname returns $\prog'$ as the synthesis result.

\section{Preliminaries}\label{sec:prelim}

In this section, we introduce the syntax and semantics of database programs and review what equivalence means in this context.

\subsection{Syntax and Semantics of Database Programs}

For the purpose of this paper, a database program consists of a set of functions, where each function is either a \emph{query} or \emph{update} to the database.  As shown in Figure~\ref{fig:syntax}, every function consists of a  name, a list of parameters, and a function body.

\begin{figure}[!t]
\[
\begin{array}{rcl}
    \emph{Prog} &:=& \emph{Func}+ \\
    \emph{Func} &:=& \textbf{update} ~ \emph{Name} (\emph{Param}+) ~ U \\
                &| & \textbf{query} ~ \emph{Name} (\emph{Param}+) ~ Q \\
    \emph{Update} ~ U &:=& \emph{InsStmt} ~|~ \emph{DelStmt} ~|~ \emph{UpdStmt} ~|~ U; U \\
    \emph{Query} ~ Q &:=& \proj_{a+}(Q) ~|~ \filter_\pred (Q) ~|~ J \\
    \emph{Join} ~ J &:=& T ~|~ J \equijoin{a}{a} J \\
    \emph{Pred} ~ \pred &:=& a \emph{ op } a ~|~ a \emph{ op } v ~|~ a \in Q ~|~ \pred \land \pred ~|~ \pred \lor \pred ~|~ \neg \pred \\
    \emph{InsStmt} &:=&  \ins(J, \set{(a:v) +}) \\
    \emph{DelStmt} &:=&  \del([T+], J, \pred) \\
    \emph{UpdStmt} &:=&   \upd(J, \pred, a, v) \\
\end{array}
\]
\[
\begin{array}{c}
\emph{Param} \in \textbf{Variable} \quad \emph{Name} \in \textbf{String} \\
T \in \textbf{Table} \quad a \in \textbf{Attribute} \quad v \in \textbf{Value} \cup \textbf{Variable} \\
\end{array}
\]
\vspace{-10pt}
\caption{Syntax of database programs. $+$ denotes the previous construct appears once or multiple times.}
\label{fig:syntax}
\end{figure}

The body of a query function is a relational algebra expression involving projection $(\proj)$, selection $(\filter)$, and join ($\Join$). As is standard, $\proj_{a_1, \ldots, a_n}(Q)$ recursively evaluates sub-query $Q$ to obtain a table $T$ and then constructs a table $T'$ that is the same as $T$ but containing only the columns $a_1, \ldots, a_n$. The filter operation $\filter_\phi(Q)$  recursively evaluates $Q$ to obtain a table $T$ and then filters out all rows in $T$ that do not satisfy predicate $\phi$.
A join expression $J_1 \equijoin{a_1}{a_2} J_2$ corresponds to the equi-join of $J_1$ and $J_2$ based on predicate $a_1 = a_2$, where $a_1$ is an attribute in $J_1$ and $a_2$ is an attribute in $J_2$. In the rest of this paper, we use the terminology \emph{join} or \emph{join chain} to refer to  both database tables as well as (possibly nested) join expressions of the form $J_1 \equijoin{a_1}{a_2} J_2$. Furthermore, since natural join is a special case of equi-join, we also use the standard notation $J_1 \Join J_2$ to denote natural joins where the equality check is implicit on identically named columns.

In contrast to query functions that do not change the state of the database, update functions can add or remove tuples to database tables. Specifically, an insert statement $\ins(T, \set{a_1:v_1, \ldots, a_n:v_n})$ inserts the tuple $\set{a_1:v_1, \ldots, a_n:v_n}$ into relation $T$.
To simplify presentation in the rest of the paper, we  use the syntax
\[
    \ins(T_1 \equijoin{\emph{fk}_1}{\emph{pk}_2} T_2, \set{a_1:v_1, \ldots, a_n:v_n, a'_1:v'_1, \ldots, a'_m : v'_m})
\]
as short-hand for the following sequence of insertions:
\[
\begin{array}{l}
\ins(T_1, ~ \set{\emph{pk}_1:u_0, ~ a_1: v_1, \ldots, a_n: v_n, ~ \emph{fk}_1: u_1}); \\
\ins(T_2, ~ \set{\emph{pk}_2:u_1, ~ a'_1: v'_1, \ldots, a'_m : v'_m}) \\
\end{array}
\]
where $u_0, u_1$ are unique values, and the schema for $T_1, T_2$ are $T_1(\emph{pk}_1, a_1, \ldots, a_n, \emph{fk})$ and $T_2(\emph{pk}_2, a'_1, \ldots, a'_m)$ respectively.

A delete statement $\del([T_1, \ldots, T_n], \view, \pred)$ removes from tables $T_1, \ldots, T_n$ exactly those tuples that satisfy predicate $\pred$ in join chain $\view$. As an example, consider the delete statement $\del([T_1], ~ T_1 \equijoin{a_1}{a_2} T_2, ~ \phi)$. Here, we first compute $T_1 \equijoin{a_1}{a_2} T_2$ to obtain a virtual table $T$ where each tuple in $T$ is the union of a source tuple in $T_1$ and a source tuple in $T_2$. We then obtain another virtual table $T'$ that filters out predicates satisfying $\phi$. Finally, we delete from $T_1$ all tuples that occur as (a prefix of) a tuple in $T'$. In contrast, if the statement is $\del([T_1, T_2], ~ T_1 \equijoin{a_1}{a_2} T_2, ~ \phi)$, the deletion is performed on both $T_1$ and $T_2$. We refer the reader to ~\cite{delete-join} for a more detailed discussion of the semantics of delete statements.~\footnote{We consider this form of delete statement rather than the more standard $\del(T, \phi)$ as it dramatically simplifies presentation in the rest of the paper.}

An update statement $\upd(\view, \pred, a, v)$ modifies the value of attribute $a$ to $v$ for all tuples satisfying predicate $\pred$ in join chain $\view$~\cite{update-join}. For instance, consider the update statement $\upd(T_1 \equijoin{a_1}{a_2} T_2, ~ \pred, ~ T_1.a_3, ~ v)$. Like delete statements, we first compute $T_1 \equijoin{a_1}{a_2} T_2$ and get a virtual table $T$ where each tuple in $T$ is the union of a source tuple in $T_1$ and a source tuple in $T_2$. Then we filter out tuples satisfying predicate $\pred$ in $T$ and get another virtual table $T'$. Finally, we update attribute $a_3$ in $T_1$  to value $v$ for all $T_1$ tuples that appear in $T'$.

\begin{example}
Consider a simple database with two tables:
\[
\vspace{-10pt}
\begin{tabular}{|c|c|c|}
\multicolumn{3}{c}{\textbf{Car}} \\
\hline
\textbf{cid} & \textbf{model} & \textbf{year} \\
\hline
1 & M1 & 2016 \\
\hline
2 & M2 & 2018 \\
\hline
\end{tabular}
~\ \ \ \ \ \ \ \ \ ~
\begin{tabular}{|c|c|c|}
\multicolumn{3}{c}{\textbf{Part}} \\
\hline
\textbf{name} & \textbf{amount} & \textbf{cid} \\
\hline
\text{tire} & 10 & 1 \\
\hline
\text{brake} & 20 & 1 \\
\hline
\text{tire} & 20 & 2 \\
\hline
\text{brake} & 30 & 2 \\
\hline
\end{tabular}
\]

The delete statement
\[
    \del([\text{Car, Part}], ~ \text{Car} \Join \text{Part}, ~ \text{model = M1})
\]
would delete tuple $(1, M1, 2016)$ from the Car table and tuples $(\text{tire}, 10, 1), (\text{brake}, 20, 1)$ from the Part table.
On the other hand, the update statement
\[
    \upd(\text{Car} \Join \text{Part}, ~ \text{model = M2} \land \text{name = tire}, ~ \text{amount}, ~ 30)
\]
would modify the third record of Part to $(\text{tire}, 30, 2)$.
\end{example}

\subsection{Equivalence of Two Database Programs}\label{sec:equiv}

Since our goal is to synthesize a program $\prog'$ that is equivalent to another database program $\prog$ with a different schema, we review the definition of equivalence introduced in prior work~\cite{mediator}.

Consider a database program $\prog$ over schema $\schema$ that has a set of update functions ${U} = (U_1, \ldots, U_n)$ and a set of query functions ${Q} = (Q_1, \ldots, Q_m)$. First, an \emph{invocation sequence} for $\prog$ is of the form
\[
\omega = (f_1, \sigma_1); \ldots; (f_{k-1}, \sigma_{k-1}); (f_k, \sigma_k)
\]
where $f_k$ is the name of a query function in $Q$, $f_1, \ldots, f_{k-1}$ refer to names of updates functions in $U$, and $\sigma_i$ corresponds to the arguments for function $f_i$.  Given a program $\prog$, we use the notation $\denot{\prog}_\omega$ to denote the result of executing $\prog$ on $\omega$.

Now, consider two programs $\prog, \prog'$ over schemas $\schema, \schema'$. Following ~\cite{mediator}, we say that $\prog$  is equivalent to $\prog'$, written $\prog \simeq \prog'$, if for \emph{any} invocation sequence $\omega$, we have $\denot{\prog}_\omega = \denot{\prog'}_\omega$ --- i.e., executing $\omega$ on $\prog$ yields the same query result as executing $\omega$ on $\prog'$ starting with an empty database instance. Thus, if two database programs are equivalent, then they yield the same query result after performing the same sequence of update operations on the database.

\section{Synthesis Algorithm}\label{sec:synthesis}

In this section, we present our algorithm for automatically migrating database programs to a new schema. We start with an overview of the top-level algorithm and then discuss value correspondence enumeration, sketch generation, and sketch completion in more detail.

\subsection{Overview}

Our top-level synthesis algorithm is summarized as pseudo-code in Algorithm~\ref{algo:synthesis}. Given the original program $\prog$ over schema $\schema$ and the target schema $\schema'$, {\sc Synthesize} either returns a program $\prog'$ such that $\prog \simeq \prog'$ or $\bot$ to indicate that no equivalent program exists.

In a nutshell, the {\sc Synthesize} procedure is a while loop (lines 2 - 7) that lazily enumerates all possible \emph{value correspondences} between the source and target schemas. Formally, a value correspondence $\valueCorr$ from source schema $\schema$ to target schema $\schema'$ is a mapping from each attribute in $\schema$ to a \emph{set} of attributes in $\schema'$~\cite{miller}. Specifically, if $T'.b \in \valueCorr(T.a)$, this indicates that the entries in column $a$ in the source table $T$ are the same as the entries in column $b$ of table $T'$ in the target schema.
Observe that, if $\valueCorr$ maps some attribute $T.a$ in $\schema$ to $\emptyset$, this indicates that attribute $a$ of table $T$ has been deleted from the database. Similarly, if $|\valueCorr(T.a)| > 1$, this indicates that attribute $T.a$ has been duplicated in the target schema.
~\footnote{
Our notion of value correspondence is a slightly simplified  version of the definition given by Miller et al.~\cite{miller}. For example, their definition also allows  attributes in the target schema to be obtained by applying a function to attributes in the source schema.  Our technique can be extended to handle this scenario, albeit at the cost of increasing the size of the search space. 
}

\begin{figure}[!t]
\vspace{-10pt}
\begin{algorithm}[H]
\caption{Synthesizing database programs}
\label{algo:synthesis}
\begin{algorithmic}[1]
\Procedure{\textsc{Synthesize}}{$\prog, \schema, \schema'$}
\vspace{2pt}
\small
\Statex \textbf{Input:} Program $\prog$ over source schema $\schema$, target schema $\schema'$
\Statex \textbf{Output:} Program $\prog'$ or $\bot$ to indicate failure
\vspace{2pt}

\While{\emph{true}}
    \State $\valueCorr \leftarrow \textsc{NextValueCorr}(\schema, \schema')$;
    \If{$\valueCorr = \bot$} \Return $\bot$; \EndIf
    \State $\sketch \leftarrow \textsc{GenSketch}(\valueCorr, \prog)$;
    \State $\prog' \leftarrow \textsc{CompleteSketch}(\sketch, \prog)$;
    \If{$\prog' \neq \bot$} \Return $\prog'$; \EndIf
\EndWhile

\EndProcedure
\end{algorithmic}
\end{algorithm}
\vspace{-10pt}
\end{figure}

Now, given a candidate value correspondence $\valueCorr$, the {\sc GenSketch} procedure at line 5 generates a sketch $\sketch$ that represents all programs that \emph{may} be equivalent to $\prog$ under the assumption that $\valueCorr$ is correct. Finally, the {\sc CompleteSketch} procedure (line 6) tries to find an instantiation $\prog'$ of $\sketch$ such that $\prog' \simeq \prog$. If such a $\prog'$ exists, then the algorithm terminates and returns $\prog'$ as the transformed program. On the other hand, if there is no completion  of the sketch that is equivalent to $\prog$, this indicates that the conjectured value correspondence is incorrect. In this case, the algorithm moves on to the next value correspondence $\valueCorr'$ and re-attempts the synthesis task using $\valueCorr'$. 

As formalized in more detail in Appendix A of this paper, our synthesis algorithm is both sound and relatively complete.
That is, if {\sc Synthesize} returns $\prog'$ as a solution, then $\prog'$ is indeed equivalent to $\prog$ by the definition from Section~\ref{sec:equiv}. Furthermore, {\sc Synthesize} is relatively complete, meaning that it can always find an equivalent program $\prog'$ under the assumption that (a) we have access to a sound and complete oracle for verifying equivalence of database programs, (b) $\prog'$ is related to $\prog$ according to a value correspondence that conforms to our definition, and (c) $\prog'$ has the same general structure as $\prog$.

In the following subsections, we explain the subroutines used in the {\sc Synthesize} algorithm in more detail.

\subsection{Lazy Enumeration of Value Correspondence}

In order to guarantee the completeness of our synthesis algorithm, we need a way to enumerate \emph{all} possible value correspondences between the source and target schemas. However, it is infeasible to generate all such value correspondences \emph{eagerly}, as there are exponentially many possibilities. In this section, we describe how to lazily enumerate value correspondences in decreasing order of likelihood using a partial weighted MaxSAT encoding.

\paragraph{Background on MaxSAT.} MaxSAT is a generalization of the traditional boolean satisfiability problem and aims to determine the maximum number of clauses that can be satisfied. Specifically, a MaxSAT problem is defined as a triple $(\mathcal{H}, \mathcal{S}, \mathcal{W})$, where  $\mathcal{H}$ is a set of \emph{hard clauses (constraints)}, $\mathcal{S}$ is a set of \emph{soft clauses}, and $\mathcal{W}$ is a mapping from each soft clause $c \in \mathcal{S}$ to a weight, which is an integer indicating the relative importance of satisfying clause $c$. Then, the goal of the MaxSAT problem is to find an interpretation $I$ such that:
\begin{enumerate}
\item $I$ satisfies all the hard clauses (i.e., $I \models \bigwedge_{c_i \in \mathcal{H}} c_i$)
\item $I$ maximizes the weight of the satisfied soft clauses
\end{enumerate}

\paragraph{Variables.} To describe our MaxSAT encoding, suppose that the source (resp. target) schema contains attributes $a_1, \ldots, a_n$ (resp. $a_1',\ldots, a_m'$). In our encoding, we introduce a boolean variable $x_{ij}$ to indicate that attribute $a_i$ in the source schema is mapped by the value correspondence $\valueCorr$ to attribute $a_j'$ in the target schema, i.e.,
\[
    x_{ij} \Leftrightarrow a'_j \in \valueCorr(a_i)
\]

\paragraph{Hard constraints.}
Hard constraints in our MaxSAT encoding rule out infeasible value correspondences:

\begin{itemize}[leftmargin=*]
\item \emph{Type-compatibility:} Since  $a_j' \in \valueCorr(a_i)$ indicates that the entries stored in $a_i$ and $a_j'$ are the same, $x_{ij}$ must be false if $a_i$ and $a_j'$ have different types. Thus, we add the following hard constraint for type compatibility:
\[
\bigwedge_{i,j} \neg x_{ij}  \ \emph{where} \  \emph{type}(a_i) \neq \emph{type}(a_j')
\]
\item \emph{Necessary condition for equivalence:} If the source program $\prog$ queries some attribute $a_i$ of the database, then there must be a corresponding attribute $a_j'$ that $a_i$ is mapped to; otherwise, the source and target programs would not be equivalent (recall Section~\ref{sec:equiv}). Thus, we introduce the following hard constraint:
\[
\bigvee_{1 \leq j \leq m} x_{ij} \ \emph{where} \  a_i \ \emph{is \ queried \  in} \ \prog
\]
which ensures that every attribute that is queried in the original program is mapped to at least one attribute in the target schema.
\end{itemize}

\paragraph{Soft constraints.}
The soft constraints in our encoding serve two purposes: First, since most attributes in the source schema typically have a unique corresponding attribute in the target schema, our soft constraints prioritize one-to-one mappings over one-to-many ones. Second, since attributes with similar names are more likely to be mapped to each other, they prioritize value correspondences that relate similarly named attributes.

To encode the latter constraint, we introduce a soft clause $x_{ij}$ with weight $\emph{sim}(a_i, a_j')$ for every variable $x_{ij}$. Here, \emph{sim} is a heuristic  metric that measures similarity between the names of attributes $a_i$ and $a_j'$.~\footnote{In our implementation, we implement \emph{sim} as $\alpha - \emph{Levenshtein}(a_i, a_j')$ where $\alpha$ is a fixed constant and $\emph{Levenshtein}$ is the standard Levenshtein distance.} To encode the former constraint,  we add a soft clause $ x_{ij} \rightarrow \neg x_{ik}$ (with fixed weight $\alpha$) for every $i \in [1, n], j \in [1, m]$ and $k \in (j, m]$. Essentially, such clauses tell the solver to de-prioritize mappings where the cardinality of $\valueCorr(a_i)$ is large.

\paragraph{Blocking clauses.}
While our initial MaxSAT encoding consists of exactly the hard and soft constraints discussed above, we need to add additional constraints to block previously rejected value correspondences. Specifically, let $A$ be an assignment (with corresponding value correspondence $\valueCorr_A$) returned by the MaxSAT solver, and suppose that there is no program $\prog'$ that is equivalent to $\prog$ under $\valueCorr_A$. In this case, our algorithm adds $\neg A$ as a hard constraint to prevent exploring the same value correspondence multiple times.

\subsection{Sketch Generation}\label{sec:sketch-gen}

In this section, we explain the {\sc GenSketch} procedure for generating a  sketch  that represents all programs that may be equivalent to $\prog$ under a given value correspondence $\valueCorr$. We first describe our sketch language and then explain how to use the value correspondence to generate a suitable sketch.

\begin{figure}[!t]
\[
\begin{array}{rcl}
   \emph{Prog} &:=& \emph{Func}+ \\
    \emph{Func} &:=& \textbf{update} ~ \emph{Name} (\emph{Param}+) ~ U \\
                &| & \textbf{query} ~ \emph{Name} (\emph{Param}+) ~ Q \\
    \emph{Update} ~ U &:=& \emph{InsStmt} ~|~ \emph{DelStmt} ~|~ \emph{UpdStmt} ~|~ U; U ~|~ U \choice U \\
    \emph{Query} ~ Q &:=& \proj_{(\hole\set{a+})+}(Q) ~|~ \filter_\pred (Q) ~|~ \view ~|~ Q \choice Q \\
    \emph{Join} ~ \view &:=&  T ~|~ \view \equijoin{a}{a} \view \\
    \emph{Pred} ~ \pred &:=& \hole\set{a+} \emph{ op } \hole\set{a+} ~|~ \hole\set{a+} \emph{ op } v \\
                        & |& \pred \land \pred ~|~ \pred \lor \pred ~|~ \neg \pred \\
    \emph{InsStmt} &:=& \ins(\view, ~ \set{(\hole\set{a+}:v) +}) \\
    \emph{DelStmt} &:=& \del(\hole\set{L+}, ~ \view, ~ \pred) \\
    \emph{UpdStmt} &:=& \upd(\view, ~ \pred, ~ \hole\set{a+}, ~ v) \\
    \emph{TabList} ~ L &:=& [T+] \\
\end{array}
\]
\[
\begin{array}{c}
\emph{Param} \in \textbf{Variable} \quad \emph{Name} \in \textbf{String} \\
T \in \textbf{Table} \quad a \in \textbf{Attribute} \quad v \in \textbf{Value} \cup \textbf{Variable} \\
\end{array}
\]
\caption{Sketch Language. $\hole$ represents a hole in the sketch and the subsequent set indicates the domain of that hole. $\choice$ is a choice operator and $s_1 \choice s_2$ denotes the statement could either be $s_1$ or $s_2$. $E+$ indicates a list of elements of type $E$.}
\label{fig:sketch-lang}
\end{figure}

\paragraph{Sketch language.}
Our sketch language for database programs is presented in Figure~\ref{fig:sketch-lang} and differs from the source language in Figure~\ref{fig:syntax} in the following ways: First, programs in the sketch language can contain a construct of the form $\hole\set{e_1, \ldots, e_n}$, where the question mark is referred to as a \emph{hole} and the set of elements $\set{e_1, \ldots, e_n}$ is the \emph{domain} of that hole --- i.e., the question mark must be filled with some element drawn from the set $\{e_1, \ldots, e_n\}$. In addition, programs in the sketch language also contain a \emph{choice} construct $s_1 \choice s_2$, which is short-hand for the conditional statement:
\[
\textbf{if} \ \hole\set{\top, \bot} \ \textbf{then} \ s_1\   \textbf{else} \ s_2
\]
where $\top,\bot$ represent the boolean constants true and false, respectively. Thus, program sketches  in this context represent multiple (but finitely many) programs written in the syntax of Figure~\ref{fig:syntax}.

\paragraph{Join correspondence.} In order to generate a sketch from a  program $\prog$ and value correspondence $\valueCorr$, our approach first maps each join chain used in $\prog$ to a set of possible join chains over the target schema. We refer to such a mapping as a \emph{join correspondence} and say that a join correspondence $(\view, \view')$ is \emph{valid} with respect to  $\valueCorr$ if $\valueCorr$ can map all attributes used in $\view$ to attributes in $\view'$.

\begin{figure}[!t]
\[
\begin{array}{c}
\irulelabel
{\begin{array}{c}
    A \subseteq \emph{Attrs}(\view) \quad \forall a \in A.~ \exists a' \in \valueCorr(a). ~ a' \in \emph{Attrs}(\view') \\
\end{array}}
{\valueCorr \vdash_A \view \sim \view'}
{\textrm{(Attrs)}} \\ \ \\

\irulelabel
{A = \emph{Attrs}(\view) \quad \valueCorr \vdash_A \view \sim \view'}
{\valueCorr \vdash \view \sim \view'}
{\textrm{(JoinChain)}}

\end{array}
\]
\vspace{-10pt}
\caption{Inference rules for checking join correspondence $(\view, \view')$  under value correspondence $\valueCorr$.}
\label{fig:rules-view}
\vspace{-8pt}
\end{figure}

Figure~\ref{fig:rules-view} presents inference rules for checking whether a join correspondence $(\view, \view')$ is valid under $\valueCorr$. Specifically, the judgment $\valueCorr \vdash_A \view \sim \view'$ indicates that every attribute $a\in A$ of join chain $\view$ can be mapped to some attribute  of join chain $\view'$ under $\valueCorr$. Similarly, the judgment  $\valueCorr \vdash \view \sim \view'$ means that \emph{every} attribute in the join chain $\view$ can be mapped to an attribute of $\view'$ using $\valueCorr$.  Observe that, if $\valueCorr \vdash \view \sim \view_1$ \emph{and} $\valueCorr \vdash \view \sim \view_2$, this means that join chain $\view$ in the source program could map to \emph{either} $\view_1$ or $\view_2$ in the target program.

\paragraph{Sketching approach.} Our sketch generation technique uses the inferred join correspondences to produce a sketch that encodes all possible programs that may be equivalent to the source program. However, since a join chain $\view$ might correspond to any one of the join chains $\view_1, \ldots, \view_n$ in the target program, our sketch generation method proceeds in two phases: In the first phase, we non-deterministically pick any one of the join chains $\view_i$ that $\view$ could map to. Then, in the second phase, we combine the sketches obtained using $\view_1, \ldots, \view_n$ to obtain a more general sketch that accounts for every possibility.

\begin{figure}[!t]
\[
\!\!\!\!\begin{array}{c}
\irulelabel
{\valueCorr \vdash \view \sim \view'}
{\valueCorr \vdash \view \leadsto \view'}
{\textrm{(Join)}}
~
\irulelabel
{\valueCorr(a) = \set{a'_1, \ldots, a'_n}}
{\valueCorr \vdash a \leadsto \hole\set{a'_1, \ldots, a'_n}}
{\textrm{(Attr)}} \\ \ \\

\irulelabel
{a_i \in \emph{Attrs}(\pred) \quad \valueCorr \vdash a_i \leadsto h_i \quad i = 1, \ldots, n}
{\valueCorr \vdash \pred \leadsto \pred[h_1/a_1, \ldots, h_n/a_n]}
{\textrm{(Pred)}} \\ \ \\

\irulelabel
{\valueCorr \vdash Q \leadsto \sketch \quad \valueCorr \vdash \pred \leadsto \pred'}
{\valueCorr \vdash \filter_{\pred}(Q) \leadsto \filter_{\pred'}(\sketch)}
{\textrm{(Filter)}} \\ \ \\

\irulelabel
{\begin{array}{c}
    \valueCorr \vdash Q(\view) \leadsto \sketch(h) \quad \valueCorr \vdash a_j \leadsto h_j \quad j = 1, \ldots, m \\
    A = \set{a_1, \ldots, a_m} \!\cup\! \emph{Attrs}(Q) \ \ \ \valueCorr \vdash_A \view \sim \view' \\
\end{array}}
{\valueCorr \vdash \proj_{a_1, \ldots, a_m}(Q(\view)) \leadsto \proj_{h_1, \ldots, h_m}(\sketch(\view'))}
{\textrm{(Proj)}} \\ \ \\

\irulelabel
{\begin{array}{c}
    A = \emph{Attrs}(L) \cup \emph{Attrs}(\pred) \quad \valueCorr \vdash \pred \leadsto \pred' \\
    \quad \valueCorr \vdash_A \view \sim \view' \quad \emph{TabLists}(\view') = \set{L_1, \ldots, L_n} \\
\end{array}}
{\valueCorr \vdash \del(L, \view, \pred) \leadsto \del(\hole \set{L_1, \ldots, L_n}, \view', \pred')}
{\textrm{(Delete)}} \\ \ \\

\irulelabel
{\begin{array}{c}
    \valueCorr \vdash \pred \leadsto \pred' \quad \valueCorr \vdash a \leadsto h \\
    A = \emph{Attrs}(\pred) \cup \set{a} \quad \valueCorr \vdash_A \view \sim \view' \\
\end{array}}
{\valueCorr \vdash \upd(\view, \pred, a, v) \leadsto \upd(\view', \pred', h, v)}
{\textrm{(Update)}} \\ \ \\

\irulelabel
{\begin{array}{c}
    \valueCorr \vdash \view \sim \view' \quad \valueCorr \vdash a_i \leadsto h_i \quad i = 1, \ldots, n \\
\end{array}}
{\begin{array}{rl}
    \valueCorr \vdash & \ins(\view, \set{a_1 : v_1, \ldots, a_m : v_m}) \leadsto \\
    & \ins(\view', \set{h_1 : v_1, \ldots, h_m : v_m}) \\
\end{array}}
{\textrm{(Insert)}}

\end{array}
\]
\vspace{-10pt}
\caption{Rewrite rules for generating sketch from value correspondence $\valueCorr$. All holes $\hole$ are annotated with an index to ensure they are globally unique. The function $\emph{TabLists}$ returns all non-empty subset of tables in a join, i.e. $\emph{TabLists}(T_1 \Join \ldots \Join T_n) = \emph{PowerSet}(\set{T_1, \ldots, T_n}) \setminus \emptyset$.}
\label{fig:rules-sketch}
\end{figure}

\paragraph{Sketch generation, phase I.} The first phase of our sketch generation procedure is summarized in Figure~\ref{fig:rules-sketch} and assumes that every join chain $\view$ in the source program maps to a unique join chain $\view'$ in the target program.  Specifically, the rules in Figure~\ref{fig:rules-sketch} derive judgments of the form
$
\valueCorr \vdash s \leadsto \sketch
$,
meaning that statement $s$ in the original program can be rewritten into sketch $\sketch$ under the assumption that (a) $\valueCorr$ is correct and (b) every join chain in the source program corresponds to a unique join chain in the target program. We now explain each of these rules in more detail.

The  Attr (resp. Join) rule corresponds to a base case of  our inductive rewrite system and generates the sketch directly using the value (resp. join) correspondence. The Pred rule first generates  holes $h_1, \ldots, h_n$ for each attribute $a_i$ in $\phi$ and then generates a predicate sketch by replacing each $a_i$ with its corresponding sketch. The Filter and Proj rules are similar and generate the sketch by recursively rewriting the nested query, predicate, and attributes.

The last three rules in Figure~\ref{fig:rules-sketch} generate sketches for update statements. Here, the Update and Insert rules are  straightforward and generate the sketch by recursively rewriting the nested attributes and predicates. For the Delete rule, recall that deletion statements are of the form $\texttt{del}(\emph{Tbls}, \view, \phi)$, where \emph{Tbls} can refer to any non-empty subset of the tables used in $\view$. Thus, the sketch for deletion statements contains a hole for \emph{Tbls}, with the domain of the hole being the power-set of the tables used in $\view'$.

\begin{figure}[!t]
\[
\!\!\!\!\begin{array}{c}

\irulelabel
{\valueCorr \vdash s \leadsto \sketch}
{\valueCorr \vdash s \twoheadrightarrow \sketch}
{\textrm{(Lift)}}
 \\ \ \\
\irulelabel
{\begin{array}{c}
    \valueCorr \vdash Q \twoheadrightarrow \sketch \quad \valueCorr \vdash Q \leadsto \sketch' \\
    \sketch = \sketch_1 \choice \ldots \choice \sketch_n \quad \sketch' \neq \sketch_i \quad i = 1, \ldots, n \\
\end{array}}
{\valueCorr \vdash Q \twoheadrightarrow \sketch \choice \sketch'}
{\textrm{(Query)}} \\ \ \\
\irulelabel
{\begin{array}{c}
    \valueCorr \vdash U \twoheadrightarrow \sketch \quad \valueCorr \vdash U \leadsto \sketch' \\
    \sketch = \sketch_1 \choice \ldots \choice \sketch_n \quad \sketch' \neq \sketch_i \quad i = 1, \ldots, n \\
\end{array}}
{\valueCorr \vdash U \twoheadrightarrow \sketch \choice \sketch' \choice (\sketch \bullet \sketch')}
{\textrm{(Update)}} \\ \ \\

\irulelabel
{\valueCorr \vdash U_1 \twoheadrightarrow \sketch_1 \quad \valueCorr \vdash U_2 \twoheadrightarrow \sketch_2}
{\valueCorr \vdash U_1; U_2 \twoheadrightarrow \sketch_1; \sketch_2}
{\textrm{(Seq)}}

\end{array}
\]
\vspace{-10pt}
\caption{Inference rules for composing multiple sketches. The composition operator $\bullet$ is defined in Figure~\ref{fig:op-comp}.}
\label{fig:rules-compose}
\end{figure}

\paragraph{Sketch generation, phase II.} Recall that a join chain in the source program may correspond to multiple join chains in the target schema --- i.e., the target join chain is not \emph{uniquely} determined by a given value correspondence. Thus, the second phase of our  algorithm combines the sketches  generated during the first phase to synthesize a more general sketch that accounts for this ambiguity.

\begin{figure}[!t]
\[
\begin{array}{rcl}
U_1 \bullet U_2 &=& U_1; U_2 \quad (U_1 = \ins \emph{ or }~ \del \emph{ or }~ \upd) \\
(U_1 ; U_2) \bullet U_3 &=& U_1; U_2; U_3 \\
(U_1 \choice U_2) \bullet U_3 &=& (U_1 \bullet U_3) \choice (U_2 \bullet U_3) \\
\end{array}
\]
\vspace{-10pt}
\caption{Definition of the composition operator.}
\label{fig:op-comp}
\end{figure}

Figure~\ref{fig:rules-compose} describes the second phase of sketch generation using judgments of the form $\valueCorr \vdash s \twoheadrightarrow \sketch$. At a high level, the rules in Figure~\ref{fig:rules-compose} compose the sketches obtained during the first phase to obtain a more general sketch. To start with, the Lift rule corresponds to a base case and states that the $\twoheadrightarrow$ relation is initially obtained using the $\leadsto$ relation. The Query rule composes multiple sketches $\sketch_1, \ldots, \sketch_n$ for a query statement $Q$ as $\sketch_1 \choice \ldots \choice \sketch_n$ --- i.e., the composed sketch is a union of the individual sketches.

The Update rule is similar to Query, but it is slightly more involved. In particular, suppose that we have two different sketches $\sketch_1, \sketch_2$ for an update statement $U$. Now, we need to account for the possibility that either one or \emph{both} of the updates may happen. Thus,  the corresponding sketch for update statements is $\sketch_1 \ \choice \ \sketch_2 \ \choice \ (\sketch_1; \sketch_2)$ rather than the simpler sketch $\sketch_1 \choice \sketch_2$ for query statements. The Update rule generalizes this discussion to arbitrarily many sketches by using a binary operator  $\bullet$ (defined in Figure~\ref{fig:op-comp}) that distributes sequential composition ($;$) over the choice ($\choice$) construct. Finally, the Seq rule allows generating a sketch for $U_1; U_2$ using the sketch $\sketch_i$ for each $U_i$.

Given a statement $s$ in the source program, its corresponding sketch $\sketch$ is obtained by applying the rewrite rules from Figure~\ref{fig:rules-compose} to a fixed-point. Specifically, let $\sketch_1, \ldots, \sketch_n$ be the set of sketches such that $\valueCorr \vdash s \twoheadrightarrow \sketch_1, \ \ldots, \ \valueCorr \vdash s \twoheadrightarrow \sketch_n$, and let us say that a sketch $\sketch$ is more general than $\sketch'$, written $\sketch \succeq \sketch'$, if $\sketch$ represents more programs than $\sketch'$.  Then, the resulting sketch for $s$ is the most general sketch $\sketch_i $ such that $\forall j \in [1,n]. \sketch_i \succeq \sketch_j$.

\subsection{Sketch Completion}\label{sec:sketch-complete}

In this section, we explain our algorithm for solving the database program sketches from Section~\ref{sec:sketch-gen}. As mentioned earlier, we do not encode the precise semantics of the sketch using an SMT formula because  relational algebra operators are difficult to express using standard first-order theories supported by SMT solvers. Instead, we perform symbolic search (using SAT) over the space of programs encoded by the sketch and then subsequently check equivalence. If the two programs are not equivalent, we employ \emph{minimum failing inputs} to further prune the search space by identifying programs that share the same root cause of failure as a previously encountered program.

\begin{figure}[!t]
\vspace{-0.2in}
\begin{algorithm}[H]
\caption{Sketch Completion}
\label{algo:completion}
\begin{algorithmic}[1]
\Procedure{\textsc{CompleteSketch}}{$\sketch, \prog$}
\vspace{2pt}
\small
\Statex \textbf{Input:} Sketch $\sketch$, Source program $\prog$
\Statex \textbf{Output:} Target program $\prog'$ or $\bot$ to indicate failure
\vspace{2pt}

\State $\cstr \leftarrow \textsc{Encode}(\sketch)$;
\While{\textsf{SAT}($\cstr$)}
    \State $\model \leftarrow \textsf{GetModel}(\cstr)$;
    \State $\prog' \leftarrow \textsf{Instantiate}(\sketch, \model)$;
    \State $\emph{done} \leftarrow \textsf{Verify}(\prog, \prog')$;
    \If{\emph{done}} \Return $\prog'$; \EndIf
    \State $\cex \leftarrow \textsf{MinCex}(\prog, \prog')$;
    \State $\cstr \leftarrow \cstr \land \textsf{Block}(\model, \cex)$;
\EndWhile
\State \Return $\bot$;

\EndProcedure
\end{algorithmic}
\end{algorithm}
\vspace{-0.3in}
\end{figure}

\paragraph{Overview.} Our sketch completion procedure is  summarized in Algorithm~\ref{algo:completion} and takes as input a program sketch $\sketch$ together with the source program $\prog$. The output of {\sc CompleteSketch} is either a completion  $\prog'$ of $\sketch$ such that $\prog \simeq \prog'$ or $\bot$ to indicate no such program exists.

At a high level, the \textsc{CompleteSketch} procedure first generates a boolean formula $\cstr$ that represents \emph{all} possible completions of the sketch $\sketch$ (line 2). While any model of $\cstr$ corresponds to a concrete program $\prog'$ that is an instantiation of $\sketch$, such a program $\prog'$ may or may not be equivalent to the input program $\prog$. Thus, the sketch solving algorithm enters a loop (lines 3--9) that lazily explores different instantiations of $\sketch$, checks equivalence, and adds useful blocking  clauses to the SAT encoding $\cstr$ as needed. In what follows, we explain the algorithm (and its subroutines) in more detail.

\paragraph{Initial SAT encoding.} The goal of the {\sc Encode} procedure at line 2 is to generate a SAT formula that encodes all possible completions of $\sketch$. Specifically, for each hole $\hole_i\set{e_1, \ldots, e_n}$ in the sketch, we introduce $n$ boolean variables $b_i^1, \ldots, b_i^n$ such that $b_i^j = \emph{true}$  if and only if hole $\hole_i$ is instantiated with expression $e_j$.~\footnote{Since the choice construct $s_1 \choice s_2$ is just syntactic sugar for $\textbf{if} \ \hole\set{\top, \bot} \ \textbf{then} \ s_1\   \textbf{else} \ s_2$, we assume it has been de-sugared before this SAT encoding.} Since any valid completion of sketch $\sketch$ must assign every hole $
\hole_i$ to some expression $e_j$ in its domain, our initial SAT encoding is obtained as follows:

\[
    \Psi = \bigwedge_{\hole_i \in \emph{Holes}(\sketch)} \oplus(b_i^1, \ldots, b_i^{i_n})
\]
where the domain of $\hole_i$ consists of expressions $e_1, \ldots e_{i_n}$, and $\oplus$ denotes the $n$-ary xor operator. Observe that every model $\model$  of formula $\Psi$ corresponds to one particular instantiation of $\sketch$; thus, the procedure \textsf{Instantiate}  produces program $\prog'$ by assigning hole $\hole_i$ to expression $e_j$ if and only if $\model$ assigns $b_i^j$ to true.

\paragraph{Verification and blocking clauses.} As is apparent from the discussion above, our symbolic encoding $\Psi$  of the sketch intentionally does not enforce equivalence between the source and target programs. Thus, whenever we obtain a completion $\prog'$ of the sketch, we must check whether $\prog, \prog'$ are actually equivalent using the \textsf{Verify} subroutine  at line 6 of Algorithm~\ref{algo:completion}. If the two programs are indeed equivalent, the algorithm terminates with $\prog'$ as a solution. Otherwise, in the next iteration, we   ask the SAT solver for a different model, which corresponds to a different instantiation of the input sketch. However, in practice, there are an enormous number (e.g., up to $10^{39}$) of completions of the sketch; thus, a synthesis algorithm that tests equivalence for every possible sketch completion is unlikely to scale. Our sketch completion algorithm addresses this issue by using minimum failing inputs to block \emph{many} programs at the same time.

Specifically, a \emph{minimum failing input} for a pair of programs $\prog, \prog'$ is an invocation sequence $\omega$ (recall Section~\ref{sec:equiv}) satisfying the following criteria:

\begin{enumerate}
\item We have $\denot{\prog}_{\omega} \neq \denot{\prog'}_{\omega}$. That is, $\omega$ is a witness to the disequivalence of $\prog$ and $\prog'$
\item  There does not exist another invocation sequence $\omega'$ such that $|\omega'| < |\omega|$ and $\denot{\prog}_{\omega'} \neq \denot{\prog'}_{\omega'}$
\end{enumerate}

Intuitively, minimum failing inputs are useful in this context because they provide feedback about which assignments to which holes cause program $\prog'$ to \emph{not} be equivalent to $\prog$. Specifically, let $\mathcal{H}$ (resp. $\overline{\mathcal{H}}$) be the holes used in functions that appear (resp. do \emph{not} appear) in $\omega$, and let $A_\mathcal{H}$ denote the assignments to holes $\mathcal{H}$. Then, any program that instantiates $\sketch$ by assigning $A_\mathcal{H}$ to $\mathcal{H}$ will also be incorrect, regardless of the assignments to holes $\overline{\mathcal{H}}$. Our sketch completion algorithm uses this observation to rule out  many programs beyond $\prog'$. Specifically, let $\mathcal{H} = \{\hole_1, \ldots, \hole_n\}$ and suppose that $A_\mathcal{H}$ assigns expression $e_{k_i}$ to each $\hole_i$. Then, the \textsf{Block} procedure (line 9 of Algorithm~\ref{algo:completion}) generates the following blocking clause:
\[
 \varphi =  \neg (b_1^{k_1} \land \ldots \land b_n^{k_n})
\]
Intuitively, this blocking clause $\varphi$ rules out all completions of $\sketch$ that agree with $\prog'$ on the assignment to holes in $\mathcal{H}$. Since minimum failing inputs typically involve a small subset of the methods in the program, this technique allows us to rule out \emph{many} programs in one iteration. Furthermore, as we discuss in Section~\ref{sec:impl}, minimum failing inputs are inexpensive to obtain using testing.

\section{Implementation}\label{sec:impl}

We have implemented the proposed synthesis technique  in a new tool called \toolname, which is implemented in Java. \toolname uses the Sat4J solver~\cite{sat4j} for answering all SAT and MaxSAT queries and the \mediator tool~\cite{mediator} for verifying equivalence between a pair of database programs.
In the remainder of this section, we  discuss two important design choices about our implementation.

\paragraph{Sketch generation.} Recall from Section~\ref{sec:sketch-gen} that our sketch generation algorithm produces a sketch using a so-called \emph{join correspondence},  which in turn is synthesized from a candidate value correspondence. While our presentation in Section~\ref{sec:sketch-gen} presents ``type-checking'' rules that determine whether a join correspondence is valid with respect to some value correspondence, it can be inefficient to consider all possible join chains in the target schema and then check whether they are feasible. Thus, rather than taking an enumerate-then-check approach,
our implementation \emph{algorithmically} produces join correspondences that are feasible with respect to a given value correspondence.

To see how we infer all target join chains that may correspond to a source join chain $J$, suppose we are given a  value correspondence $\valueCorr$ and let $A$ be the  set of attributes that occur in $\view$. Our goal is to find all join chains $J_1, \ldots, J_n$ over the target schema such that for every attribute $a \in A$, there is a corresponding attribute $a' \in \emph{Attrs}(J_i)$. We reduce the problem of finding all such possible join chains  to the problem of finding all possible Steiner trees~\cite{steinertree} over a graph data structure where nodes represent tables and edges represent join-ability relations.

In more detail, let $A'$ be a set of attributes over the target schema such that for every $a \in A$, there exists some $a' \in A'$ where $a' \in \valueCorr(a)$, and let $\mathcal{T'}$ denote the set of tables containing all attributes in $A'$. Since the source join chain refers to all attributes in $A$, we need to find exactly those join chains over the target schema that ``cover'' the relations in which $A'$ appears. Towards this goal, we construct a graph data structure $G = (V, E)$ as follows: The nodes $V$ are tables in the target schema, and there is an edge $(T, T')$ if tables $T$ and $T'$ can be joined with each other. Now, recall that, given a graph $G = (V, E)$ and a set of vertices $V' \subseteq V$, a Steiner tree is a connected subgraph that spans all vertices $V'$. Since our goal is to ``cover'' exactly the tables $\mathcal{T'}$ in the target schema, we compute all possible Steiner trees spanning $\mathcal{T'}$ and convert them to join chains  in the expected way.

\paragraph{Generating minimum failing inputs.} Recall from Section~\ref{sec:sketch-complete} that our sketch completion algorithm uses minimum failing inputs to prune the search space. In our implementation, we generate such inputs using a bounded testing procedure.
Specifically, we generate a fixed set of constants for each type (e.g., $\set{0, 1}$ for integers) as the seed set to be used for arguments. Then, given such a seed set $C$ of constants, our testing engine generates all possible invocation sequences containing only constants from $C$ in increasing order of length.
For each  invocation sequence $\omega$, we execute both $\prog$ and $\prog'$ on $\omega$ and check if the outputs are different. If so, we return $\omega$ as a minimum failing input, and otherwise, we test equivalence using  the next invocation sequence.

\paragraph{Verification.} Our sketch completion algorithm from Section~\ref{sec:sketch-complete} invokes a \textsf{Verify} procedure to check if two programs are equivalent.  However, since full-fledged verification using the \mediator tool~\cite{mediator} can be quite expensive, we first perform exhaustive testing up to some bound and invoke \mediator only when no failing inputs are found.
In principle, it is possible that the testing procedure fails to find a failing input while the verifier cannot establish  equivalence.  We have not encountered this kind of  scenario in practice, but it could nonetheless happen in theory.

\section{Evaluation}

\begin{table*}[t]
\centering
\vspace{5pt}
\caption{Main experimental results.}
\label{tab:results}
\vspace{-5pt}
\small
\begin{tabular}{|c|l|l|c|c|c|c|c|c|c|c|c|}
\hline
& \multirow{2}{*}{\textbf{Benchmark}} & \multirow{2}{*}{\textbf{Description}} & \multirow{2}{*}{\!\textbf{Funcs}\!} & \multicolumn{2}{c|}{\textbf{Source Schema}} & \multicolumn{2}{c|}{\textbf{Target Schema}} & \textbf{Value} & \multirow{2}{*}{\!\textbf{Iters}\!} & \textbf{Synth} & \textbf{Total} \\
\cline{5-8}
& & & & \textbf{Tables} & \textbf{Attrs} & \textbf{Tables} & \textbf{Attrs} & \textbf{Corr} & & \textbf{Time(s)} & \textbf{Time(s)} \\
\hline
\hline
\multirow{11}{*}{\begin{turn} {90}\makecell{\small \bf textbook \ bench}\end{turn}}
& Oracle-1 & Merge tables & 4 & 2 & 8 & 1 & 6 & 1 & 1 & 0.3 & 2.7 \\
\cline{2-12}
& Oracle-2 & Split tables & 19 & 3 & 17 & 7 & 25 & 1 & 5 & 0.5 & 11.3 \\
\cline{2-12}
& Ambler-1 & Split tables & 10 & 1 & 6 & 2 & 7 & 1 & 2 & 0.3 & 2.9 \\
\cline{2-12}
& Ambler-2 & Merge tables & 10 & 2 & 7 & 1 & 6 & 1 & 1 & 0.3 & 0.6 \\
\cline{2-12}
& Ambler-3 & Move attrs & 7 & 2 & 5 & 2 & 5 & 2 & 5 & 0.4 & 30.6 \\
\cline{2-12}
& Ambler-4 & Rename attrs & 5 & 1 & 2 & 1 & 2 & 1 & 1 & 0.3 & 0.5 \\
\cline{2-12}
& Ambler-5 & Add associative tables & 8 & 2 & 5 & 3 & 6 & 5 & 7 & 0.3 & 3.1 \\
\cline{2-12}
& Ambler-6 & Replace keys & 10 & 2 & 9 & 2 & 8 & 1 & 1 & 0.3 & 0.7 \\
\cline{2-12}
& Ambler-7 & Add attrs & 8 & 2 & 7 & 2 & 8 & 1 & 1 & 0.3 & 0.6 \\
\cline{2-12}
& Ambler-8 & Denormalization & 14 & 3 & 10 & 3 & 13 & 1 & 7 & 0.5 & 3.1 \\
\hline
\hline
\multirow{10}{*}{\begin{turn} {90}\makecell{\small \bf real-world \ bench}\end{turn}}
& cdx & Rename attrs, split tables & 138 & 16 & 125 & 17 & 131 & 1 & 7 & 11.9 & 38.9 \\
\cline{2-12}
& coachup & Split tables & 45 & 4 & 51 & 5 & 55 & 1 & 10 & 1.8 & 6.7 \\
\cline{2-12}
& 2030Club & Split tables & 125 & 15 & 155 & 16 & 159 & 1 & 2 & 5.2 & 24.8 \\
\cline{2-12}
& rails-ecomm & Split tables, add new attrs & 65 & 8 & 69 & 9 & 75 & 1 & 6 & 2.5 & 10.3 \\
\cline{2-12}
& royk & Add and move attrs & 151 & 19 & 152 & 19 & 155 & 1 & 17 & 46.1 & 60.1 \\
\cline{2-12}
& MathHotSpot & Rename tables, move attrs & 54 & 7 & 38 & 8 & 42 & 6 & 11 & 1.2 & 5.8 \\
\cline{2-12}
& gallery & Split tables & 58 & 7 & 52 & 8 & 57 & 1 & 11 & 2.5 & 9.4 \\
\cline{2-12}
& DeeJBase & Rename attrs, split tables & 70 & 10 & 92 & 11 & 97 & 1 & 8 & 3.5 & 9.3 \\
\cline{2-12}
& visible-closet & Split tables & 263 & 26 & 248 & 27 & 252 & 1 & 108 & 1304.7 & 1370.8 \\
\cline{2-12}
& probable-engine\!\! & Merge tables & 85 & 12 & 83 & 11 & 78 & 1 & 9 & 4.6 & 17.5 \\
\hline
\hline
& \textbf{Average} & - & \textbf{57.5} & \textbf{7.2} & \textbf{57.1} & \textbf{7.8} & \textbf{59.4} & \textbf{1.5} & \textbf{11.0} & \textbf{69.4} & \textbf{80.5} \\
\hline
\end{tabular}
\vspace{5pt}
\end{table*}

To evaluate the proposed idea, we use \toolname to  automatically migrate 20 database programs to a new schema.

\paragraph{Benchmarks.}
All 20 programs in our benchmark set
are taken from prior work~\cite{mediator} for verifying equivalence between database programs.~\footnote{
While  prior work considers 21 benchmarks in the evaluation, one of these benchmarks cannot be verified by \mediator. Since we use \mediator as our verifier, we exclude this one benchmark from our evaluation.
}
Specifically, half of these benchmarks are adapted from textbooks and online tutorials, and the remaining half are manually extracted from real-world web applications on Github.
However, because the input language of \toolname is slightly different from that of \mediator, we write a translator to convert the database programs to \toolname's input language.

\paragraph{Experimental Setup.}
All of our experiments are conducted on a machine with Intel Xeon(R) E5-1620 v3 quad-core CPU and 32GB of physical memory, running the Ubuntu 14.04 operating system. For each synthesis benchmark, we set a time limit of 24 hours.

\subsection{Main Results}

Our main experimental results are summarized in Table~\ref{tab:results}. Here,  the first ten rows correspond to benchmarks taken from  database schema refactoring textbooks, and the latter  ten rows correspond to real-world Ruby-on-Rails applications collected from Github. The ``Description'' column in Table~\ref{tab:results} explains how the database schema differs between the source and target versions, and ``Funcs'' shows the number of functions that need to be synthesized.
The next two columns under ``Source Schema'' (resp. ``Target Schema'') describe the number of tables and attributes in the source (resp. target) schema.
The last four columns report the results obtained by running \toolname on each benchmark. Specifically, the column  ``Value Corr'' shows the number of value correspondences considered by \toolname, and ``Iters'' shows the number of  programs explored before an equivalent one is found. Finally, the ``Synth Time'' column shows  synthesis time in seconds (excluding verification), and ``Total Time'' shows total time, including both synthesis and verification.

The key takeaway message from this experiment is that \toolname can successfully synthesize equivalent versions of all 20 benchmarks, including the database programs in real-world Ruby-on-Rails web applications with up to $263$ functions. Furthermore,  synthesis time (excluding verification) ranges from $0.3$ seconds to $1304.7$ seconds, with the average time being $69.4$ seconds in total or $1.2$ seconds per function. We believe these results provide strong evidence that our proposed technique can be quite useful for automating the schema refactoring process for database programs.

\subsection{Comparison to Baselines}

Given that there are other existing techniques for solving program sketches, we also evaluate our sketch completion algorithm by comparing our method against two baselines. In particular, our first baseline is the \sketchtool tool~\cite{sketch1}, and the second one is a variant of our own sketch completion algorithm that does not use minimum failing inputs (MFIs).

\paragraph{Comparison with \sketchtool.} To compare our approach against the \sketchtool tool~\cite{sketch1}, we first implemented the semantics of SQL in \sketchtool by encoding each SQL statement as a C function. Specifically, our \sketchtool encoding models each database table as an array of arrays, with the nested array representing a tuple, and we model each SQL operation as a function that reads and updates the array as appropriate.

The results of this experiment are summarized in Table~\ref{tab:comp-sketch}. The main observation is that {\sc Sketch} times out on all real-world benchmarks  from Github as well as two textbook examples, namely Oracle-2 and Ambler-8. For all other benchmarks, \toolname is significantly faster than {\sc Sketch}, with speed-ups ranging between $5.3$x to $10455.0$x in terms of synthesis time.
~\footnote{
Since \sketchtool only performs bounded model checking rather than  full-fledged verification, we only report  speedup in terms of synthesis time rather than total time including verification. The speedup in terms of total time (including verification) ranges from $2.4$x to $1358.0$x.
}
We believe this experiment demonstrates the advantage of our proposed sketch completion algorithm compared to the standard CEGIS approach  implemented in~\sketchtool.

\begin{table}[t]
\centering
\vspace{5pt}
\caption{Comparison with \sketchtool.}
\label{tab:comp-sketch}
\vspace{-5pt}
\small
\begin{tabular}{|c|l|c|c|}
\hline
& \multirow{2}{*}{\textbf{Benchmark}} & \multicolumn{2}{c|}{\textbf{\sketchtool}} \\
\cline{3-4}
& & \textbf{Synth Time(s)} & \textbf{Speedup} \\
\hline
\hline
\multirow{11}{*}{\begin{turn} {90}\makecell{\small \bf textbook \ bench}\end{turn}}
& Oracle-1 & 88.2 & 294.0x \\
\cline{2-4}
& Oracle-2 & >86400.0 & >172800.0x \\
\cline{2-4}
& Ambler-1 & 3136.5 & 10455.0x \\
\cline{2-4}
& Ambler-2 & 71.5 & 238.3x \\
\cline{2-4}
& Ambler-3 & 74.7 & 186.8.5x \\
\cline{2-4}
& Ambler-4 & 1.6 & 5.3x\\
\cline{2-4}
& Ambler-5 & 494.4 & 1648.0x \\
\cline{2-4}
& Ambler-6 & 226.2 & 754.0x \\
\cline{2-4}
& Ambler-7 & 814.8 & 2716.0x \\
\cline{2-4}
& Ambler-8 & >86400.0 & >172800.0x \\
\hline
\hline
\multirow{10}{*}{\begin{turn} {90}\makecell{\small \bf real-world \ bench}\end{turn}}
& cdx & >86400.0 & >7260.5x \\
\cline{2-4}
& coachup & >86400.0 & >48000.0x \\
\cline{2-4}
& 2030Club & >86400.0 & >16615.4x \\
\cline{2-4}
& rails-ecomm & >86400.0 & >34560.0x \\
\cline{2-4}
& royk & >86400.0 & >1874.2x \\
\cline{2-4}
& MathHotSpot & >86400.0 & >72000.0x \\
\cline{2-4}
& gallery & >86400.0 & >34560.0x \\
\cline{2-4}
& DeeJBase & >86400.0 & >24685.7x \\
\cline{2-4}
& visible-closet & >86400.0 & >66.2x \\
\cline{2-4}
& probable-engine & >86400.0 & >18782.6x \\
\hline
\hline
& \textbf{Average} & \textbf{>52085.4} & \textbf{>750.5x} \\
\hline
\end{tabular}
\vspace{5pt}
\end{table}

\paragraph{Comparison with enumerative search.} Since the key novelty of our sketch completion algorithm is the use of minimum failing inputs to prune the search space, we also compare our approach against a baseline that does not use MFIs. In particular, this baseline uses the same SAT encoding of the search space but blocks only a single program  at a time. More concretely, given a model $\model$ of the SAT encoding $\cstr$, this baseline updates $\cstr$ by conjoining $\neg \model$ whenever verification fails. Effectively, this baseline performs enumerative search but does so in a symbolic way using a SAT solver.

The results of this experiment are summarized in Table~\ref{tab:comp-enum}. As we can see from this table, the impact of MFIs is particularly pronounced for the Ambler-8 textbook example and almost all real-world benchmarks. In particular, \toolname is $192.3$x faster than enumerative search on average. Moreover, without using MFIs to prune the search space, two of the benchmarks do not terminate within a time-limit of 24 hours. Hence, these results demonstrate that our MFI-based sketch completion  is very important for practical synthesis.

\begin{table}[t]
\centering
\vspace{5pt}
\caption{Comparison with symbolic enumerative search.}
\label{tab:comp-enum}
\vspace{-5pt}
\small
\begin{tabular}{|c|l|c|c|c|}
\hline
& \multirow{2}{*}{\textbf{Benchmark}} & \multicolumn{3}{c|}{\textbf{Symbolic Enum}} \\
\cline{3-5}
& & \textbf{Iters} & \!\textbf{Synth Time(s)}\! & \!\textbf{Speedup}\! \\
\hline
\hline
\multirow{11}{*}{\begin{turn} {90}\makecell{\small \bf textbook \ bench}\end{turn}}
& Oracle-1 & 1 & 0.3 & 1.0x \\
\cline{2-5}
& Oracle-2 & 5 & 0.5 & 1.0x \\
\cline{2-5}
& Ambler-1 & 2 & 0.3 & 1.0x \\
\cline{2-5}
& Ambler-2 & 1 & 0.3 & 1.0x \\
\cline{2-5}
& Ambler-3 & 6 & 0.4 & 1.0x \\
\cline{2-5}
& Ambler-4 & 1 & 0.3 & 1.0x \\
\cline{2-5}
& Ambler-5 & 11 & 0.4 & 1.3x \\
\cline{2-5}
& Ambler-6 & 1 & 0.3 & 1.0x \\
\cline{2-5}
& Ambler-7 & 1 & 0.3 & 1.0x \\
\cline{2-5}
& Ambler-8 & 67996 & 54367.6 & 108735.2x \\
\hline
\hline
\multirow{10}{*}{\begin{turn} {90}\makecell{\small \bf real-world \ bench}\end{turn}}
& cdx & 5595 & 6169.4 & 518.4x \\
\cline{2-5}
& coachup & 1303 & 76.2 & 42.3x \\
\cline{2-5}
& 2030Club & 2 & 5.2 & 1.0x \\
\cline{2-5}
& rails-ecomm & 2779 & 602.5 & 241.0x \\
\cline{2-5}
& royk & >31249 & >86400.0 & >1874.2x \\
\cline{2-5}
& MathHotSpot & 115 & 5.3 & 4.4x \\
\cline{2-5}
& gallery & 21483 & 32266.2 & 12906.5x \\
\cline{2-5}
& DeeJBase & 605 & 142.8 & 40.8x \\
\cline{2-5}
& visible-closet & >9512 & >86400.0 & >66.2x \\
\cline{2-5}
& probable-engine\!\! & 1661 & 540.3 & 117.5x \\
\hline
\hline
& \textbf{Average} & \!\!\textbf{>7116.5}\!\! & \textbf{>13348.9} & \textbf{>192.3x} \\
\hline
\end{tabular}
\vspace{5pt}
\end{table}

\section{Related Work}

In this section, we survey related papers on program synthesis, schema refactoring, and analysis of database applications.

\paragraph{Schema evolution.} There is a body of literature  on automating the schema refactoring process, including the rewrite of SQL queries and updates~\cite{schema-survey,visser,prism1,prism2,chase1,chase2}. Among these works, the most related one is the {\sc Prism} project and its successor \textsc{Prism++}~\cite{prism1, prism2}. In addition to the original program and the source and target schemas, the {\sc Prism} approach requires the user to provide so-called \emph{Schema Modification Operators (SMOs)} that describe how  tables in the source schema are modified to tables in the target schema. The basic idea is to leverage these user-provided SMOs to rewrite  SQL queries using the well-known \emph{chase} and \emph{backchase} algorithms~\cite{chase1,chase2}. To deal with updates, they additionally require  the user to provide \emph{Integrity Constraint Modification Operators (ICMOs)} and ``translate'' updates  into queries. In contrast to the {\sc Prism} approach, our method does not require users to provide  modification operators expressed in a domain-specific language.
Although it is possible to explore the search space of SMOs and ICMOs to automate the generation of new database programs, we decided not to pursue this approach for two reasons: first, the rewriting technique in \textsc{Prism} requires these modification operators to be invertible, and, second, the search space of operator sequences is also potentially very large.

\paragraph{Analysis of database applications.} Over the past decade, there has been significant interest in analyzing, verifying, and testing database applications~\cite{empirical-study,wave1,wave2,verifas,zhendong07,smpsql,sparklite,mediator,sql-mc, rubicon,testing1, testing2, testing3, testing4}. For instance, the \textsc{WAVE}~\cite{wave1, wave2} and \textsc{VERIFAS}~\cite{verifas} projects aim to verify temporal properties of database applications, and recent work by
Itzhaky et al.~\cite{smpsql} proposes a technique to verify pre- and post-conditions of methods with embedded SQL statements. There has also been some work on model checking database applications~\cite{sql-mc, rubicon} as well as automatically generating test cases~\cite{testing1, testing2, testing3, testing4}.

Since our goal  is to synthesize a new version of the program that is \emph{equivalent} to a previous version, this paper is particularly related to verification techniques for checking equivalence~\cite{mediator,hottsql, cosette, sql-encoding, usemiring}. Most of these papers  focus  on equivalence between individual SQL queries~\cite{hottsql, cosette, sql-encoding, usemiring}. The only work that addresses the problem of verifying equivalence between entire database programs is \mediator, which automatically infers a bisimulation invariant between the two programs~\cite{mediator}.  As discussed in Section~\ref{sec:equiv}, we adopt the definition of equivalence proposed in that paper.
However, our synthesis technique does not simply add a CEGIS loop on top of \mediator's SMT theory because such an approach would require solving complex quantified  formulas over the theory of lists.  Instead, our approach performs enumerative search but uses minimum failing inputs to significantly prune the search space.

\paragraph{Program synthesis.} This paper is related to a long line of recent work on program synthesis~\cite{sketch1, sketch2, sketch3, sygus, synquid, flashfill, prose, neo, syngar, emina1, emina2, spr, deepcoder, mayur1, mayur2, mayur3, mcmc, relc, relc-concurrent, relish}.
While the goal of program synthesis is always to  produce a program that satisfies the given specification, different synthesizers use different forms of specifications, including  input-output examples~\cite{flashfill,prose,neo,deepcoder,syngar}, logical constraints~\cite{sketch1,sketch2,sketch3}, refinement types~\cite{synquid}, or a reference implementation~\cite{mcmc,relc,emina1}. Our technique belongs in the latter category in that it uses the original implementation as the specification.

Among existing techniques, our synthesis algorithm is particularly related to {\sc Neo}~\cite{neo}, which uses \emph{conflict-driven learning} to infer useful lemmas from failed synthesis attempts. Our sketch solving algorithm from Section~\ref{sec:sketch-complete} can also be viewed as performing some form of conflict driven learning in that it uses minimum failing inputs to rule out many programs that share the same root cause of failure as the currently explored one. However, our technique is much more lightweight compared to {\sc Neo} because it does not compute a minimum unsatisfiable core of the logical specification for the failing program. Instead, our technique exploits the observation that only a subset of the methods in a database program are necessary for proving disequivalence.

\paragraph{Synthesis for database programs.} In recent years, there have been several papers that apply program synthesis to SQL queries or database programs~\cite{scythe,morpheus,sqlizer,nalir,qbs,fiat}. For instance, {\sc Sqlizer}~\cite{sqlizer} synthesizes SQL queries from natural language, whereas {\sc Scythe}~\cite{scythe} and {\sc Morpheus}~\cite{morpheus} generate queries from examples. The {\sc QBS} system uses program synthesis to repair performance bugs in database applications~\cite{qbs}. Finally, {\sc Fiat}~\cite{fiat} performs deductive synthesis to generate SQL-like operations from declarative specifications. However, none of these techniques consider the problem of automatically migrating database programs in the presence of schema refactoring.

\section{Limitations}

In this section, we will explain and discuss some limitations of the \toolname tool.

First, \toolname cannot handle schema changes that are not expressible using our notion of value correspondence. For example, one can merge two columns ``first name'' and ``last name'' into a single column ``name'' and use string operations to extract first or last names in a query. These types of schema refactorings cannot be expressed using our definition of value correspondence. While it is relatively straightforward to expand our technique to a richer scope of value correspondences (e.g., by enriching the sketch language to include a set of predefined functions like {\tt concat}, {\tt split}, etc), this change would require a more sophisticated verifier that can reason about the semantics of built-in functions.

Second, \toolname does not synthesize database programs with control-flow constructs such as if statements and loops, because the underlying equivalence verifier~\cite{mediator} does not support database programs with those constructs. 

Third, the notion of equivalence considered in Section~\ref{sec:equiv} characterizes  \emph{behavioral equivalence} between database programs, which ensures that two corresponding sequences of transactions yield the same result. However, it does not enforce that the underlying data stored in the database is inserted or manipulated in particular ways. For example, \toolname may choose to delete from one or multiple tables when performing deletion as long as the new program satisfies the behavioral equivalence requirement. In some contexts, it may be desirable to adopt a stronger definition of equivalence than the one we consider in this paper.

\vspace{-10pt}
\section{Conclusion}
In this paper, we have studied the problem of automatically synthesizing database programs in the presence of schema refactoring. Our technique decomposes the synthesis procedure into three tasks, namely (i) lazy value correspondence enumeration, (ii) sketch generation from a candidate value correspondence, and (iii) sketch completion using conflict-driven learning with minimum failing inputs. We have implemented the proposed technique in a tool called \toolname and evaluated it on 20 schema refactoring benchmarks, including real-world scenarios taken from Github. Our evaluation shows that \toolname can automatically synthesize the new versions of all 20 benchmarks and indicates that the proposed technique would be useful to database application developers during the schema evolution process.

\vspace{-10pt}
\section*{Acknowledgments}
We would like to thank Shuvendu Lahiri for  initial discussions on the topic of this paper. We also thank Xinyu Wang, Kostas Ferles, Shankara Pailoor, and other UToPiA group members for their useful  feedback on earlier drafts. We would also like to thank our shepherd, Emina Torlak, and the anonymous reviewers for their constructive suggestions for improvement.
Finally, we are very grateful to the National Science Foundation for supporting this work under Grants \#1762299, \#1712067, and \#1811865.

\bibliography{main}

\newcommand*{\extended}{}
\ifdefined\extended
\appendix
\section{Theorems and Proofs}

\begin{theorem}[Soundness]
Given database program $\prog$ over source schema $\schema$ and a target schema $\schema'$, if procedure \textsc{Synthesize} ($\prog, \schema, \schema'$) returns $\prog'$ and $\prog' \neq \bot$, then $\prog' \simeq \prog$.
\end{theorem}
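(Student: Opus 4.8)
The plan is to prove soundness by a direct control-flow analysis of the two procedures involved, reducing the entire claim to the soundness of the equivalence check performed by \textsf{Verify}. First I would observe that \textsc{Synthesize} (Algorithm~\ref{algo:synthesis}) returns a value other than $\bot$ only through line 7, and that this returned value is exactly the program $\prog'$ produced by the call $\textsc{CompleteSketch}(\sketch, \prog)$ on line 6 (the only other return, on line 4, yields $\bot$ and is thus excluded by hypothesis). Hence it suffices to prove the following lemma: whenever $\textsc{CompleteSketch}(\sketch, \prog)$ returns some $\prog' \neq \bot$, we have $\prog' \simeq \prog$.

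Next I would inspect Algorithm~\ref{algo:completion}. The only statement that returns a value other than $\bot$ is line 7, and it is guarded by the condition that the variable $\emph{done}$ is true, where $\emph{done} = \textsf{Verify}(\prog, \prog')$ was computed on line 6. Therefore the central ingredient is the soundness of \textsf{Verify}: if $\textsf{Verify}(\prog, \prog')$ returns true, then $\prog \simeq \prog'$ under the definition of behavioral equivalence from Section~\ref{sec:equiv}. Since \textsf{Verify} ultimately discharges equivalence using the \mediator tool~\cite{mediator}, which is sound with respect to precisely this notion of equivalence, the conclusion $\prog' \simeq \prog$ follows at once. It is worth emphasizing in the writeup that the correctness of \textsc{GenSketch}, the value-correspondence enumeration, the SAT encoding $\cstr$, the model $\model$, and the minimum-failing-input blocking clauses produced by \textsf{Block} are all irrelevant to soundness: no matter how the sketch, the SAT model, or the blocking logic behave, any program that is actually returned has first passed an explicit equivalence check, so nothing unsound can escape the loop.

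The main point to be careful about is not a genuine obstacle but a conditionality: the theorem holds only modulo \textsf{Verify} being a sound oracle, so I would state the soundness of \mediator as an explicit assumption (consistent with the relative-completeness caveats already noted after Algorithm~\ref{algo:synthesis}). A secondary subtlety is the implementation's practice of running bounded testing before invoking the full verifier (Section~\ref{sec:impl}); for the purposes of this theorem, \textsf{Verify} denotes the sound equivalence decision, and testing only short-circuits negative instances, so it plays no role in the soundness argument. The genuinely hard reasoning lies in the completeness direction—showing that a valid $\prog'$ is eventually found—but that is a separate statement and does not affect the present proof.
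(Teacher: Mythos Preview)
Your proposal is correct and follows essentially the same approach as the paper's own proof: trace the control flow to observe that a non-$\bot$ return from \textsc{Synthesize} must come from \textsc{CompleteSketch}, which in turn only returns $\prog'$ after $\textsf{Verify}(\prog,\prog')$ succeeds, and then appeal to the soundness of the \textsf{Verify} oracle. Your additional remarks about the irrelevance of sketch generation and blocking clauses to soundness, and the testing-before-verification caveat, are accurate elaborations but not part of the paper's (much terser) argument.
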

\begin{proof}
Observe that procedure \textsc{Synthesize} only returns non-$\bot$ program $\prog'$ if \textsc{CompleteSketch} returns $\prog'$ and $\prog' \neq \bot$.
Moreover, procedure \textsc{CompleteSketch} only returns non-$\bot$ program $\prog'$ if \textsf{Verify}($\prog, \prog'$) can verify the equivalence between $\prog$ and $\prog'$. By the soundness of oracle \textsf{Verify}, we have $\prog' \simeq \prog$.
\end{proof}

\begin{definition}[Completion of sketch]
Given a sketch $\sketch$ with holes $\vec{\hole}$, a program $\prog$ is called a \emph{completion} of $\sketch$, denoted by $\prog \in \gamma(\sketch)$, if and only if there exists a vector $\vec{e}$ where $e_i$ is in the domain of hole $\hole_i$ such that $\prog = \sketch[\vec{e} / \vec{\hole}]$.
\end{definition}

\begin{definition}[Structural isomorphism of programs]
Two database programs $\prog$ and $\prog'$ are \emph{structurally isomorphic} under value correspondence $\valueCorr$ if there is a mapping $M$ between the statements of $\prog$ and the statements of $\prog'$ such that
\begin{enumerate}
\item for each statement $S$ in $\prog$, $M(S) = S'_1; \ldots; S'_n$, where $S'_i$ is a statement in $\prog'$ and $S'_i$ is of the same type as $S$
\item for any pair of predicates $\phi$ in $S$ and $\phi'$ in $M(S)$, $\valueCorr \vdash \phi \sim \phi'$ as defined in Figure~\ref{fig:rules-pred-struct}
\end{enumerate}
\end{definition}

\begin{definition}[Database instance]
A database instance $\inst$ is a mapping from relation names to lists of tuples of corresponding relations in the database.
\end{definition}

We use the notation $\denot{U}_\inst$ to refer to the instance obtained by performing the update $U$ on $\inst$.
Similarly, $\denot{Q}_\inst$ refers to the result of query $Q$ in $\inst$.

\begin{definition}[Valid database instance]
A valid database instance of a program $\prog$ is one which can be obtained from the empty instance $\epsilon$ by performing a sequence of updates $U_1, \dots, U_n$ in $\prog$.
\end{definition}

\begin{figure}[!t]
\[
\begin{array}{c}
\irulelabel
{c \in \valueCorr(a) \quad d \in \valueCorr(b)}
{\valueCorr \vdash a \circ b \sim c \circ d}
{}
~
\irulelabel
{\valueCorr \vdash p \sim q}
{\valueCorr \vdash \neg p \sim \neg q}
{} \\ \ \\

\irulelabel
{\valueCorr \vdash p \sim r \quad \valueCorr \vdash q \sim s}
{\valueCorr \vdash p \land q \sim r \land s}
{}
~
\irulelabel
{\valueCorr \vdash p \sim r \quad \valueCorr \vdash q \sim s}
{\valueCorr \vdash p \lor q \sim r \lor s}
{} \\ \ \\

\end{array}
\]
\vspace{-15pt}
\caption{Inference rules for structural isomorphism of predicates under a value correspondence $\valueCorr$. The notation $\circ$ denotes a binary predicate (=, <, etc.)}
\label{fig:rules-pred-struct}
\vspace{-5pt}
\end{figure}

\begin{definition}[Instance Mapping]
Given two structurally isomorphic programs $\prog$ and $\prog'$, the \emph{instance mapping} $\eta$ from valid instances in $\prog$ to valid instances in $\prog'$ is defined as follows
\begin{enumerate}
\item $\eta(\epsilon) = \epsilon$
\item for any update $U$ in $\prog$ and its corresponding update $U'$ in $\prog'$, $\eta(\denot{U}_\inst) = \denot{U'}_{\eta(\inst)}$
\end{enumerate}
\end{definition}

\begin{definition}[Strong equivalence]
Suppose we are given two structurally isomorphic programs $\prog, \prog'$ and the instance mapping $\eta$ between their valid instances $\inst, \inst'$. Programs $\prog$ and $\prog'$ are \emph{strongly equivalent} under value correspondence $\valueCorr$ if $\denot{Q}_\inst = \denot{Q'}_{\eta(\inst)}$ holds for any pair of queries $Q$ and $Q'$ related by $\valueCorr$.
\end{definition}

Intuitively, the notion of strong equivalence is different from the equivalence defined in Section~\ref{sec:equiv}, because the synthesis process only generates database programs that are structurally ``similar'' to the original program.

\begin{theorem}[Relative Completeness of Sketch Generation]
Suppose two structurally isomorphic programs $\prog$ and $\prog'$ are strongly equivalent under value correspondence $\valueCorr$. If \textsc{SketchGen} $(\prog, \valueCorr) = \sketch$, then $\prog' \in \gamma(\sketch)$.
\end{theorem}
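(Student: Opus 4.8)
The plan is to prove this by structural induction on the statements of $\prog$, showing that for each source statement $S$ the corresponding target statement(s) $M(S)$ supplied by the structural isomorphism constitute a completion of the sketch that {\sc SketchGen} produces for $S$. Since a completion is just a choice of an element from each hole's domain (Definition ``Completion of sketch''), the whole argument reduces to exhibiting, hole by hole, a legal selection whose instantiation reproduces $M(S)$, and then assembling these per-statement selections into a single assignment $\vec e$ with $\prog' = \sketch[\vec e / \vec{\hole}]$.

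First I would prove an inner lemma for the first phase of generation (the $\leadsto$ rules of Figure~\ref{fig:rules-sketch}): if $S'$ is a single target statement of the same type as $S$ whose predicates are related to those of $S$ by $\valueCorr$ (Figure~\ref{fig:rules-pred-struct}) and whose join chain $\view'$ satisfies $\valueCorr \vdash \view \sim \view'$, then $S' \in \gamma(\sketch')$ for some $\sketch'$ with $\valueCorr \vdash S \leadsto \sketch'$. This is the routine part and follows by matching the rules directly: the Attr rule gives each attribute hole the domain $\valueCorr(a)$, so the predicate-isomorphism premise $c \in \valueCorr(a)$ lets me select the target attribute $c$ actually used in $S'$; the Delete rule's table-list hole ranges over $\emph{TabLists}(\view')$, which contains the particular subset deleted in $S'$; and the Pred, Filter, Proj, Insert, and Update rules all close under a straightforward recursive matching of sub-sketches.

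Next I would lift this lemma to the second phase (the $\twoheadrightarrow$ rules of Figure~\ref{fig:rules-compose}). For queries, the Query rule forms a choice $\sketch_1 \choice \cdots \choice \sketch_n$ over every first-phase sketch, so once the inner lemma places $M(S)$ inside one disjunct the associated choice-hole $\hole\set{\top,\bot}$ can be set to select it. For updates, the Update rule additionally offers $\sketch_i \bullet \sketch_j$ through the composition operator of Figure~\ref{fig:op-comp}, which is exactly what is required when the structural isomorphism maps a single source update to a sequence $M(S) = S'_1; \ldots; S'_k$ (the case where an attribute has been duplicated across several target tables); I would argue by induction on $k$ that the iterated $\bullet$-composition yields a sketch containing this sequence, and the Seq rule then glues the per-statement results into a sketch for the whole update body. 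Because $\gamma$ is monotone under generality (if $\sketch \succeq \sketch'$ then $\gamma(\sketch') \subseteq \gamma(\sketch)$), membership in \emph{any} $\twoheadrightarrow$-derivable sketch entails membership in the $\succeq$-maximal one that {\sc SketchGen} actually returns.

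The main obstacle is discharging the inner lemma's premise that the target join chain $\view'$ genuinely satisfies $\valueCorr \vdash \view \sim \view'$, i.e. that every attribute of $\view$ referenced by $S$ has a $\valueCorr$-image inside $\view'$. Structural isomorphism by itself constrains only predicates, so this fact must be extracted from \emph{strong equivalence}: I would show that if some used attribute $a$ of $\view$ had no corresponding attribute in $\view'$, then one could construct a valid instance $\inst$ that distinguishes two tuples precisely on the entry governed by $a$, forcing $\denot{Q}_\inst \neq \denot{Q'}_{\eta(\inst)}$ and contradicting the strong-equivalence hypothesis. Combined with the observation that the $\leadsto$ derivations already range over \emph{all} target join chains consistent with $\valueCorr$ (finitely many over a fixed schema, realized concretely by the Steiner-tree enumeration of Section~\ref{sec:impl}), this guarantees that the specific $\view'$ appearing in $\prog'$ is one of the sketch's candidates, completing the argument that $\prog' \in \gamma(\sketch)$.
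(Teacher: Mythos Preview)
Your proposal is correct and follows essentially the same route as the paper: a per-statement structural induction that first handles the single-statement case via the $\leadsto$ rules (your ``inner lemma'' is the paper's base-case analysis over Join, Filter, Proj, Insert, Delete, Update) and then lifts to sequences via the $\twoheadrightarrow$ rules (the paper's inductive case splitting on Query versus Update and unfolding $\bullet$). Your explicit appeal to monotonicity of $\gamma$ under $\succeq$ and your isolation of the join-correspondence obligation $\valueCorr \vdash \view \sim \view'$ as the crux---to be discharged from strong equivalence by a distinguishing-instance argument---are exactly the points the paper leans on, though the paper is terser there (it often just writes ``since $\prog$ and $\prog'$ are related by $\valueCorr$'' and, for Insert/Delete/Update, invokes strong equivalence on an auxiliary projection/filter query over $\view$ to force the target tuple, attribute, and value to line up).
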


\begin{proof}
We show that each statement must be in the completion of the sketch generated from the source program, so the final program must be as well. For each statement $S$ in $\prog$ and the corresponding statement $S'$ in $\prog'$, we perform structural induction on $S'$. Note that $S$ has to have the same type as each statement in $S'$.

\begin{itemize}[leftmargin=*]
\item
Base cases: $S'$ is a single statement. We perform structural induction on $S$.

    \begin{enumerate}
    \item
    Base case: $S$ is a join $\view$, $\valueCorr \vdash S \leadsto \sketch$. By the assumption of structural isomorphism, we know $S'$ must also be a join, say $\view'$.
    Since $\prog$ and $\prog'$ are related by value correspondence $\valueCorr$, we know that $\valueCorr \vdash \view \sim \view'$. Thus $\view'$ is exactly the generated sketch by the \textrm{Join} rule, and $\view' \in \gamma(\sketch)$.

    \item
    Inductive case: $S$ is of the form $\filter_\pred(Q)$, $\valueCorr \vdash S \leadsto \sketch$, then $S'$ is of the form $\filter_{\pred'}(Q')$.
    By inductive hypothesis, we know that $Q' \in \gamma(\sketch_0)$ where $\valueCorr \vdash Q \leadsto \sketch_0$.
    Since $\prog$ and $\prog'$ are structurally isomorphic, we have $\valueCorr \vdash \pred \sim \pred'$.
    By the structural isomorphism of predicates, replacing corresponding attributes $a$ in $\pred$ with $a' \in \valueCorr(a)$ yields a completion corresponding to $\pred'$, so $\filter_{\pred'}(Q') \in \gamma(\sketch)$.

    \item
    Inductive case: $S$ is of the form $\proj_{a_1, \dots, a_m}(Q)$, $\valueCorr \vdash S \leadsto \sketch$, then $S'$ is of the form $\proj_{a'_1, \ldots, a'_n}(Q')$.
    By inductive hypothesis, we know that $Q' \in \gamma(\sketch_0)$ where $\valueCorr \vdash Q \leadsto \sketch_0$.
    In addition, we have $m = n$, otherwise $\denot{Q}_{\inst} = \denot{Q'}_{\eta(\inst)}$ does not hold for any $\inst$, because $\denot{Q}_{\inst}$ and $\denot{Q'}_{\eta(\inst)}$ differ on the number of elements in the tuple.
    Given that $\prog$ and $\prog'$ are related by value correspondence $\valueCorr$, we have $a'_i \in \valueCorr(a_i)$ for $i \in [1, n]$ by the \textrm{Attr} rule. Thus, by the \textrm{Proj} rule, we know $\proj_{a'_1, \ldots, a'_n}(Q') \in \gamma(\sketch)$.

    \item
    Inductive case: $S$ is of the form $\ins(\view, \set{ a_1: v_1, \ldots, a_m: v_m})$, $\valueCorr \vdash S \leadsto \sketch$, then $S'$ is of the form $\ins(\view', \set{ a'_1: w_1, \ldots, a'_n: w_n})$.
    Let $r$ refer to the source tuple ($\set{ a_1: v_1, \ldots, a_m: v_m}$) in $S$ and $r'$ to the target tuple ($\set{ a'_1: w_1, \ldots, a'_n : w_n}$).
    Since $\prog$ and $\prog'$ are related by $\valueCorr$, we know that
    $\valueCorr \vdash \view \sim \view'$.
    For all valid instances $\inst$ after performing $S$ (i.e. there exists some $\inst_0$ such that $\inst = \denot{S}_{\inst_0}$), we know by the semantics of insertion that $r \in \denot{\proj_{a_1, \ldots, a_m}(\view)}_{\inst}$.
    Then since $\prog$ and $\prog'$ are strongly equivalent, if attribute $b_i \in \valueCorr(a_i)$ for $i \in [1, m]$ and $q$ is a tuple corresponding to $r$, $q \in \denot{\proj_{b_1, \ldots, b_m}(\view')}_{\eta(\inst)}$.
    Then $r'$ must be the tuple that is obtained by applying value correspondence $\valueCorr$ to tuple $r$. So $r' = \set{ b_1: v_1, \ldots, b_m: v_m}$ where $b_i \in \valueCorr(a_i)$, which is exactly a completion of the generated sketch $\sketch$ by the \textrm{Attr} and \textrm{Insert} rules.

    \item
    Inductive case: $S$ is of the form $\del(L, \view, \pred)$, $\valueCorr \vdash S \leadsto \sketch$, then $S'$ is of the form $\del(L', \view', \pred')$.
    Since $\prog$ and $\prog'$ are related by value correspondence $\valueCorr$, we know that $\valueCorr \vdash \view \sim \view'$.
    Let $R_1 = \denot{\filter_\pred(\view)}_{\inst}$ for some valid instance $\inst$, and let $R_2 = \denot{\filter_\pred(\view)}_{\inst'}$ where $\inst' = \denot{S}_{\inst}$.
    The semantics of delete require that the rows in $R_1$ but not in $R_2$ are removed.
    Let $R'_1$ and $R'_2$ be the results of the same queries in $\eta(\inst)$ and $\eta(\inst')$. The result of the queries must be equal, so we reduce it the filter case as discussed above.
    In addition, we have table list $L' \subseteq \view'$ because $S'$ is a well-formed statement. By the \textrm{Delete} rule, we know $\del(L', \view', \pred') \in \gamma(\sketch)$.

    \item
    Inductive case: $S$ is of the form $\upd(\view, \pred, a, v)$, $\valueCorr \vdash S \leadsto \sketch$, then $S'$ is of the form $\upd(\view', \pred', a', w)$.
    Since $\prog$ and $\prog'$ are related by value correspondence $\valueCorr$, we know that $\valueCorr \vdash \view \sim \view'$.
    Let $R_1 = \denot{\proj_a(\filter_\pred(\view))}_{\inst}$ for some valid instance $\inst$, and let $R_2 = \denot{\proj_a(\filter_\pred(\view))}_{\inst'}$ where $\inst' = \denot{S}_{\inst}$.
    The semantics of update require that the rows in $R_2$ are the same as those in $R_1$ but with the value set to $v$.
    Let $R'_1$ and $R'_2$ be the results of the same queries in $\eta(\inst)$ and $\eta(\inst')$. The result of the queries must be equal, so we know $a' \in \valueCorr(a)$ and $w = v$.
    By the \textrm{Update} rule and the case for filter discussed above, we know $\upd(\view', \pred', a', w) \in \gamma(\sketch)$.
    \end{enumerate}

\item
Inductive cases: $S'$ is of the form $S'_1; \ldots; S'_n$.
Let $S$ be the source statement corresponding to $S'$ and $\valueCorr \vdash S \leadsto \sketch$. We perform a case analysis on the type of statement $S$.
    \begin{enumerate}
    \item
    $S$ is a query statement.
    By the \textrm{Query} composition rule in Figure~\ref{fig:rules-compose}, we know $n = 1$. By inductive hypothesis, we have $S'_1 \in \gamma(\sketch)$. Thus, $S' \in \gamma(\sketch)$.

    \item
    $S$ is an update (\ins, \del, or \upd) statement.
    Based on the \textrm{Update} composition rules in Figure~\ref{fig:rules-compose} and the definition of $\bullet$ operator, we know that sketch $\sketch$ must be of the form $\sketch_1 \choice \ldots \choice (\sketch_{k1}; \ldots; \sketch_{kn}) \choice \ldots \choice \sketch_m$.
    By the inductive hypothesis, we have $S'_1 \in \gamma(\sketch_{k1}), \ldots, S'_n \in \gamma(\sketch_{kn})$. Thus, $S'_1; \ldots; S'_n \in \gamma(\sketch_{k1}; \ldots; \sketch_{kn})$.
    According to the semantics of $\choice$, we have $S'_1; \ldots; S'_n \in \gamma(\sketch)$.
    \end{enumerate}
\end{itemize}
\end{proof}

\begin{theorem}[Relative Completeness of Sketch Completion]\label{thm:complete}
Suppose the oracle \textsf{Verify} can verify equivalence between any pair of database programs and can provide minimum failing inputs if two programs are not equivalent. The \textsc{CompleteSketch} procedure can always find program $\prog' \in \gamma(\sketch)$ such that $\prog' \simeq \prog$ if such $\prog'$ exists.
\end{theorem}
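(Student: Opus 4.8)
The plan is to prove that \textsc{CompleteSketch} is a terminating, exhaustive search procedure that never rules out a correct completion. The key observations are that (a) the SAT encoding $\cstr$ faithfully represents $\gamma(\sketch)$, so every model corresponds to a distinct completion and vice versa; (b) each iteration of the while loop either returns a verified-equivalent program or strictly shrinks the set of models of $\cstr$ by adding a nontrivial blocking clause; and (c) the blocking clauses never eliminate a program $\prog'$ that is actually equivalent to $\prog$. Establishing (a)--(c) suffices: since $\gamma(\sketch)$ is finite and each failing iteration removes at least one model, the loop must terminate, and if a correct $\prog'$ exists it can never be blocked, so the solver will eventually return it (or an equally-correct alternative).

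First I would make (a) precise: the initial encoding $\cstr = \bigwedge_{\hole_i} \oplus(b_i^1, \ldots, b_i^{i_n})$ has models in exact bijection with assignments of each hole to an element of its domain, hence with $\gamma(\sketch)$ via \textsf{Instantiate}. Next, for (b), I would argue termination by a monotonicity/progress argument. Each non-returning iteration conjoins a blocking clause $\varphi = \neg(b_1^{k_1} \wedge \ldots \wedge b_n^{k_n})$ derived from the minimum failing input $\cex$ returned by \textsf{MinCex}. Because the current model $\model$ sets exactly these literals true, $\model$ itself is no longer a model of the updated $\cstr$; thus the number of satisfying assignments strictly decreases. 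Since $|\gamma(\sketch)|$ is finite, after finitely many iterations either \textsf{Verify} succeeds or $\cstr$ becomes unsatisfiable and the procedure returns $\bot$.

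The crux of the argument is (c): soundness of the blocking clauses, i.e.\ that a correct program is never blocked. Here I would lean on the defining property of the minimum failing input $\cex = \omega$. Let $\mathcal{H}$ be the holes appearing in functions invoked by $\omega$ and $\overline{\mathcal{H}}$ the rest. The blocking clause only constrains the assignment $A_{\mathcal{H}}$ to $\mathcal{H}$. The key lemma is that the query result $\denot{\prog'}_\omega$ depends \emph{only} on the instantiation of holes in $\mathcal{H}$, since $\omega$ mentions no function using a hole in $\overline{\mathcal{H}}$; hence \emph{any} completion agreeing with $\prog'$ on $A_{\mathcal{H}}$ produces the same differing output on $\omega$ and is therefore also inequivalent to $\prog$. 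Consequently the clause only excludes genuinely incorrect programs, and a correct $\prog'$ (which has $\denot{\prog'}_\omega = \denot{\prog}_\omega$ for all $\omega$) never matches the blocked partial assignment.

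I expect step (c) to be the main obstacle, because it requires justifying that executing $\omega$ touches only the holes in $\mathcal{H}$ --- this rests on the locality assumption that a function's behavior is determined by its own holes and the database state, together with the fact that the state reached by $\omega$ is itself produced only by the functions (and hence holes) in $\mathcal{H}$. Making this ``frame'' reasoning rigorous, and handling the interaction through shared database state carefully, is the delicate part; the termination and faithful-encoding arguments are comparatively routine. I would therefore isolate (c) as a standalone lemma about the semantics of invocation sequences before assembling the three pieces into the final relative-completeness claim.
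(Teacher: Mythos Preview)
Your proposal is correct and follows the same overall shape as the paper's proof: establish that the initial SAT encoding $\cstr$ is in bijection with $\gamma(\sketch)$, and then argue that the search loop cannot terminate without returning a correct completion when one exists. The paper's appendix proof covers your points (a) and (b) in essentially the same way.

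The meaningful difference is your point (c). The paper's formal proof in the appendix argues: if the loop exits with $\bot$ then the correct model $\model$ must have been eliminated by some blocking clause, ``but it only occurs after calling procedure $\textsf{Verify}(\prog, \prog')$, which should have returned \emph{true}.'' Read literally, this assumes that $\model$ can only be blocked in an iteration where $\model$ itself was the model returned by the solver --- which is true for na\"{\i}ve single-model blocking but not obviously true for the MFI-based partial-assignment blocking that the algorithm actually uses. The justification that a partial blocking clause never kills a correct completion appears only informally in Section~\ref{sec:sketch-complete} of the main text, not in the appendix proof. You make this step explicit and identify it as the crux, proposing to isolate it as a standalone lemma about locality of $\denot{\prog'}_\omega$ with respect to the holes appearing in functions invoked by $\omega$. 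That is the right place to put the weight, and your plan is more rigorous on exactly this point than the paper's own appendix argument.
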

\begin{proof}
We begin by showing that the \textsc{CompleteSketch} procedure is complete under the assumption that there exists a model $\model$ of the initial constraint $\Psi = \textsc{Encode}(\sketch)$ such that $\textsf{Instantiate}(\sketch, \model) = \prog'$.
Assume that $\prog'$ exists, and it is the first such $\prog'$. Let $\model$ be the model such that $\prog' = \textsf{Instantiate}(\sketch, \model)$.
First, observe that if $\model$ is not blocked in the last iteration, then the constraint $\Psi$ in that iteration must still be satisfiable, so the loop should not have terminated.
There must be some iteration in which $\model$ is blocked. But it only occurs after calling procedure $\textsf{Verify}(\prog, \prog')$, which should have returned \emph{true}. Thus, if such $\prog'$ exists, it will have been returned.

Next, we must show that constraint $\Psi$ itself is complete.
Let $\prog'$ be a completion of $\sketch$, i.e. $\prog' \in \gamma(\sketch)$.
For each hole $\hole_i$ that is assigned $e_j$ in program $\prog'$, assign variable $b_i^j$ to \emph{true} and all other variables $b_i^k$ ($k \ne j$) to \emph{false}.
Clearly, for every hole $\hole_i$, formula $\oplus(b_i^1, \dots, b_i^{i_n})$ evaluates to \emph{true}, as we have assigned a single expression to \emph{true}.
Then we have found an assignment that satisfies $\Psi$, and $\textsf{Instantiate}$ will produce the same completion.
\end{proof}

\fi

\end{document}